\newtheorem{theorem}{Theorem}
\newtheorem{lemma}{Lemma}
\newcommand{\ph}[1]{{\color{magenta}#1}}
\newcolumntype{Y}{>{\centering\arraybackslash}X}
\begin{document}
\title{
Efficient formulation of quantum network  under amplitude damping noise: Highlighting benefits over its Pauli-twirled counterpart
} 

\author{Sudipta Mondal$^1$, Pritam Halder$^{1}$, Stav Haldar$^2$, Aditi Sen(De)$^1$}

\affiliation{$^1$Harish-Chandra Research Institute, A CI of Homi Bhabha National Institute, Chhatnag Road, Jhunsi, Prayagraj 211019, India\\
\(^2\) Manning College of Information and Computer Sciences,
University of Massachusetts Amherst,
140 Governors Drive, Amherst, MA, USA }

\begin{abstract}

At the heart of building a large-scale quantum internet lies the challenge of establishing long-distance entanglement using quantum repeaters, which mitigate direct transmission losses but introduce additional noise in the nodes via interactions with the environment and imperfect operations. This effect has typically been studied under a simplifying Pauli channel assumption. Our study focuses on distributing end-to-end entanglement in a homogeneous, repeater-based linear quantum network operating under a non-Pauli noise, specifically, amplitude damping noise, which we refer to as amplitude damping-affected quantum network (AQN). Unlike its twirled counterpart (TAQN), where the resulting state is fully Bell-diagonal with a single parameter, we prove that the AQN produces a block-diagonal state in the Bell basis with four parameters.  We develop a method for the simulation of AQN, where we keep track of these four parameters of each entangled link, along with the number of times noise acts on it, i.e., its age, until it is consumed for swapping. Our results reveal that across diverse policies, including \textsc{nesting} and \textsc{swap-asap}, AQN consistently outperforms TAQN in terms of both fidelity and average entanglement. The benefit is most significant in the low-probability regime of elementary link generation, highly relevant for near-term experiments. Notably, we also identify the coherence-time and link-probability regions where TAQN fails while AQN succeeds in distributing end-to-end entanglement.

\end{abstract}

\maketitle


\section{Introduction}

The quantum internet \cite{Kimble2008, VanMeter2014, Wehner2018, Cacciapuoti2020, Dowling2020, Illiano2022, Munro2022, quant_internet_rev} promises efficient transmission of quantum information alongside classical messages among multiple parties, enabling distributed applications \cite{jozsa2000, troupe2022, SH_QCS_PRA,SH_QCS2_PRA, Ducoing2025, cirac1999, ge2018, proctor2018, Pirker2019, Markham2008,Gottesman2012, Kellerer2014} far beyond the scope of today’s classical internet. Its realization relies on the efficient distribution of long-distance entanglement at high rates, for example, by sending single-photon qubits through optical fibers, free-space, or atmospheric channels. However, such direct transmission suffers from an exponentially decreasing success probability with distance  \cite{ Karp1988, Svelto2010, Kaushal2017}. To overcome this, {\it quantum repeaters} \cite{Briegel1998, Dur_1999, Muralidharan2016} have been proposed, which divide the total length into shorter segments, generate elementary links by performing distillation or error correction \cite{Bennett1996_dis, Deutsch1996} together with entanglement swapping \cite{bennet1993_tel, mode1993} at intermediate nodes to connect these segments \cite{Briegel1998, Dur_1999, Muralidharan2016}. Recent experimental progress toward building a quantum internet includes the realization of three-node quantum networks \cite{Pompili2021, Hermans_2022}, proof-of-principle demonstrations of repeater protocols \cite{Bhaskar2020, langenfeld2021}, and heralded entanglement generation across metropolitan-scale fiber links \cite{Knaut2024, Stolk2024}. Yet, decoherence in quantum memories, probabilistic link generation, and imperfect entanglement swapping continue to limit the achievable rate, quality, and distance of entangled links. To mitigate these effects, concrete theoretical frameworks and protocols have been proposed \cite{kamin2023, Khatri2022, halder_stav, Haldar2025, Iesta2023, lsuDESIGNLONGDISTANCE}, typically focusing on entanglement distribution under quantum memory decoherence with Pauli-type noise models \cite{Nielsen2000QuantumComputation}.

A key advantage of simulating a Pauli noise-affected quantum network (PQN) lies in its simplicity: at any stage of the simulation, every distributed entangled pair remains diagonal in the Bell basis, with coefficients determined solely by how long the pair has been stored in memory. In other words, the age of any `link' uniquely specifies its fidelity and the amount of entanglement. Moreover, when two noisy links undergo entanglement swapping, the resulting link acquires an age equal to the sum of ages of the consumed links \cite{schmidt2020, Khatri2022, Haldar2025}. Thus, simulations only require tracking a single parameter per link, the age, rather than the full density matrix. This efficiency explains why most existing studies \cite{kamin2023, Khatri2022, halder_stav, Haldar2025, Iesta2023, lsuDESIGNLONGDISTANCE} on quantum repeater chains focus on Pauli channels, while non-Pauli channels are usually approximated by Pauli twirling \cite{dur2005, Emerson2007, silva2008, Dankert2009} (cf. also \cite{Sarvepalli2009, Ghosh2012}). However, this simplification undermines the accuracy of optimization procedures, including swapping and cut-off policies, distillation protocols, and encoding/decoding strategies \cite{halder_stav,deandrade2024,kamin2023}, and fails to capture the genuine effects of non-Pauli channels such as amplitude damping (AD) noise in realistic systems.

Unlike most existing studies, this work goes beyond the usual Pauli approximations by directly addressing the non-twirled amplitude damping channel (ADC), a crucial step, as twirling is also resource-intensive and often impractical to implement \cite{Anwar2005, lopez2010}.
By retaining the full quantum description of the states, our approach establishes a foundation for more effective optimization of protocols and policies under realistic noise conditions. In the non-Pauli noise regime, we investigate a natural question: ``Can network simulations be performed without tracking the full density matrices of all two-party states?" We affirmatively answer this by exploring the amplitude damping noise, relevant to quantum memories realized in atomic systems, quantum dots, and photonic architectures \cite{Nielsen2000QuantumComputation,Chirolli2008,Khatri2022}. Specifically, for {\it an amplitude damping-affected linear repeater network, referred to as AQN} (see Fig.~\ref{fig:schematic}), we address a more precise question:
What is the most general state of the AQN  at any arbitrary step in generating an end-to-end entangled link? To answer this, we derive noise update and entanglement swap rules within the repeater protocol and prove that the resulting state indeed admits a simplified parametrization: it is block-diagonal in the Bell basis and fully specified by four parameters. Moreover, we exhibit that even after multiple successive rounds of noise and swapping, the state consistently retains this structure, remaining confined within the same four-parameter family. In contrast, when adopting the Pauli-twirled approximation of AD noise, termed as TAQN, the resulting state remains strictly diagonal in the Bell basis and is effectively described by a single parameter, thus carrying less information than the corresponding AQN state.
Indeed, through simulations of various distribution strategies such as \textsc{swap-at-last}, \textsc{nesting} \cite{Briegel1998, Dur_1999}, and \textsc{swap-asap} \cite{TJCoopmans2021,Shchukin2019, shchukin2022, kamin2023} and assuming deterministic swapping, we explicitly demonstrate that accounting for amplitude damping noise yields higher entanglement values and fidelities at the end of the protocol, compared to its twirled counterpart. Specifically, we pinpoint the regions of link-generation probability and coherence time where AQN sustains a nonvanishing entanglement generation capability, while its TAQN version completely fails to produce entanglement -- we call this a region of {\it absolute-advantage}.

The remainder of this work is organized as follows. Sec.~\ref{sec:main_proofs} presents the theoretical formulation of the amplitude damping-affected quantum network in a linear repeater network. In particular, here, we derive the noise update and swap rules for arbitrary links of the AQN (Sec.~\ref{sec:rules}) and discuss the same for the Pauli-twirled counterpart of the ADC in Sec.~\ref{sec:pauli-twirled}. Sec.~\ref{sec:sim_method} describes the simulation methodology employed to analyze the AQN under different policies, while Sec.~\ref{sim_res} reports the corresponding simulation results. Sec.~\ref{sec:con} contains the concluding remarks.

\begin{figure*}
    \centering
\includegraphics[width=1.0\linewidth]{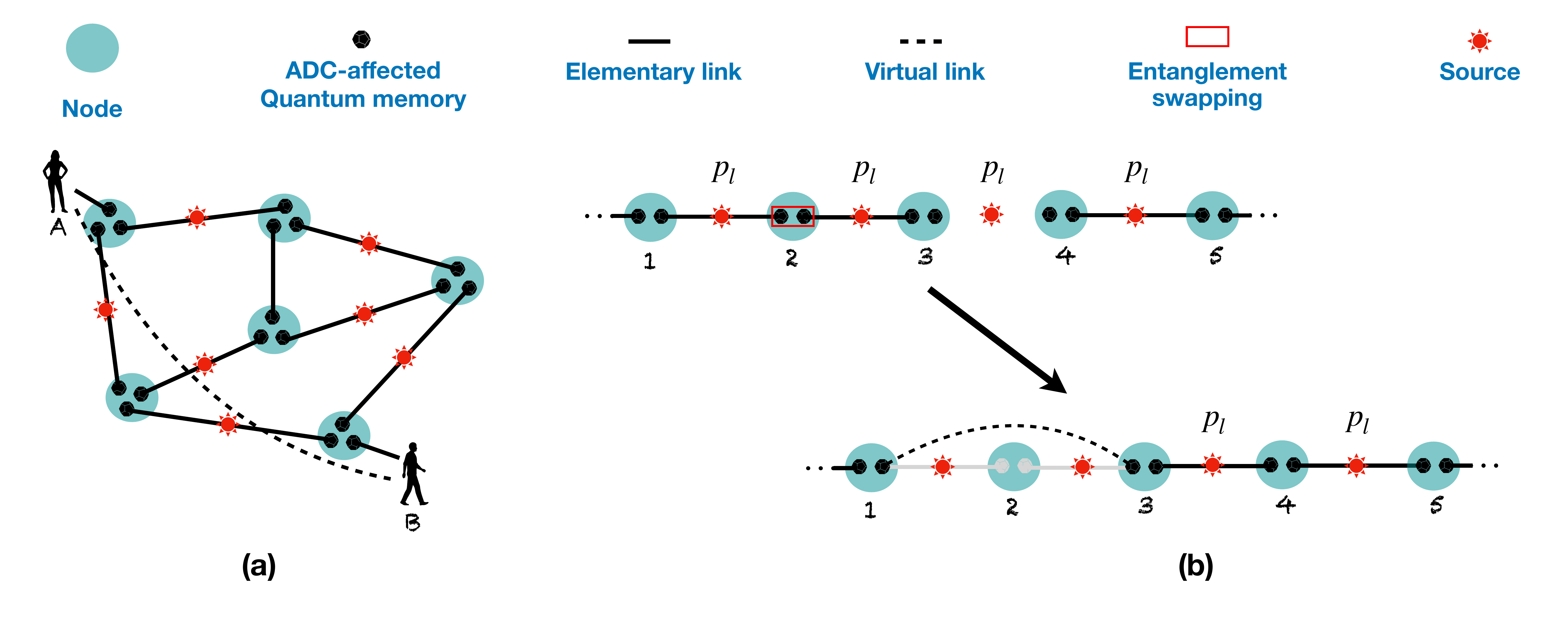}
    \caption{(a) A schematic representation of a general quantum network having an arbitrary number of nodes (big circles) where the quantum memories are affected by amplitude damping noise (solid circles). Solid lines represent elementary-level links generated from the sources (stars), while the dashed lines are the virtual links produced through entanglement swapping, marked by a rectangular box. The ultimate goal is to establish an entangled link between A and B, marked as a dashed line.
    (b) An example of a part of a long amplitude damping-affected linear quantum network (AQN). Here, the repeater chain is homogeneous, i.e., all the sources have the same link generation probability $p_l$. At a given instance, all links are assumed to be 
    successfully generated except the one between the nodes $3$ and $4$. Since both memories at node $2$ are active, a swapping operation (which we assume to be deterministic throughout our work) at this node establishes a virtual entangled link between the nodes $1$ 
    and $3$. Simultaneously, the direct link between nodes $3$ and $4$ is successfully 
    generated, extending the connectivity of the network.
}
\label{fig:schematic}
\end{figure*}

\section{Modelling of amplitude damping noise-affected linear quantum network}
\label{sec:main_proofs}

This section begins by presenting the setup of a linear quantum network and analyzing the impact of amplitude damping noise on an elementary link. Building on this foundation, we present the core results of our work: the characterization of a generic link in AQN and the derivation of the AD noise update and swap rules. These results serve as the backbone for the numerical simulations in Sec.~\ref{sec:policies}, which offer detailed insights into optimizing end-to-end entanglement creation in quantum repeater networks subject to non-Pauli noise.

\subsection{A linear quantum network}
Let us consider a quantum network with $N$ nodes arranged in a line, as shown in Fig.~\ref{fig:schematic}(b). Each end node of a linear chain consists of a single quantum memory, while all intermediate nodes, referred to as repeaters, possess two quantum memories. We deal with a generic process by which entanglement is generated with some probability $p_l\in[0,1]$ per attempt between nearest neighbor nodes. Note that $p_l$ depends on the hardware parameters such as detector inefficiencies, transmission probability of channels, and 
finite absorption cross sections of quantum memories \cite{Khatri2021, Khatri2022}. We refer to these entangled pairs between nearest neighbors as active elementary links. An elementary link is inactive if no entanglement exists between the corresponding nodes. Each repeater node is allowed at most two connections -- one to a node ahead and one to a node behind along the chain, thereby avoiding leapfrogged states. An entangled link distributed between non-nearest neighbor nodes (via entanglement swapping operations) is referred to as a virtual link. 
\subsection{Elementary link and effect of ADC}
We assume that when a fresh elementary link is generated, it is of the form of $\ket{\Phi^+}=(\ket{00}+\ket{11})/\sqrt 2$. Let $\sigma_{PQ}=\Phi^+=\ketbra{\Phi^+}$ be the density matrix representing the elementary link generated between two quantum memories, $P$ and $Q$, which can decohere with time. Hence, in a discrete time setting \cite{Ghosh2012} which is the model considered in this work, the effect of ADC acting for $m_1$ time steps on $\sigma_{PQ}$ can be described as (see Appendix~\ref{app:noise_on_pauli_basis})
\begin{eqnarray}
    \nonumber \sigma_{PQ}(m_1)&=&(\mathcal N_P^{\circ {m_1}}\otimes \mathcal N_{Q}^{{\circ m_1}})\sigma_{PQ},\\\nonumber&=&\frac{1}{4}[\mathbb 1\otimes \mathbb 1+a( X\otimes  X- Y\otimes  Y)\\\nonumber &&+(1- a)(\mathbb 1\otimes  Z+ Z\otimes \mathbb 1)\\&&+\{(1- a)^2+ a^2\} Z\otimes  Z],
   \label{eq:ele_adc}
\end{eqnarray}
where $a=(1-\gamma)^{m_1}$ with \(\gamma\) being the strength of the noise parameter and $\{\mathbb 1,  X,  Y,  Z\}$ is the set of a single-qubit Pauli basis. Defining $\Phi_{xz}=\ketbra{\Phi_{xz}}$,   $\Phi_{x'z'}^{xz}=\ketbra{\Phi_{xz}}{\Phi_{x'z'}}$ and rewriting Eq.~\eqref{eq:ele_adc} in terms of Bell basis, $\{\Phi_{xz}=\mathtt{Proj}\left(\mathbb 1\otimes X^xZ^z\ket{\Phi^+}\right)\}_{x,z=0}^1$~\footnote{The Bell basis consists of  \(\ket{\Phi_{00}}=\ket{\Phi^+}\), \(\ket{\Phi_{01}}=\ket{\Phi^-}=(\ket{00}-\ket{11})/\sqrt{2}\), \(\ket{\Phi_{10}}=\ket{\Psi^+}=(\ket{01}+\ket{10})/\sqrt{2}\), and \(\ket{\Phi_{11}}=\ket{\Psi^-}=(\ket{01}-\ket{10})/\sqrt{2}\).}, we arrive at the following Lemma:
\begin{lemma}
    A link $\sigma_{PQ}=\Phi^+$, affected by amplitude damping noise $m_1$ times, can be expressed in the Bell basis as 
    \begin{eqnarray}
        \nonumber \sigma_{PQ}(\{c_i\}_{i=1}^3)&=&c_1\Phi_{00}+c_2\Phi_{01}+c_3(\Phi_{01}^{00}+\Phi_{00}^{01})\\&&+2^{-1}(1-\sum_{i=1}^2c_i)(\Phi_{10}+\Phi_{11}),
    \end{eqnarray}
    \rm{where} \(c_1={(1+a^2)}/{2}\), \(c_2={(1-a)^2}/{2}\) and \(c_3={(1-a)}/{2}\), \rm{with} $a=(1-\gamma)^{m_1}$.
    \label{lem:adc_link_bell}
\end{lemma}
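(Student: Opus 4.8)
The plan is to take the Pauli-basis expansion in Eq.~\eqref{eq:ele_adc} as the starting point and change basis term by term into the Bell basis; the only ingredient needed is the action of each two-qubit Pauli operator appearing there on the four Bell states $\{\ket{\Phi_{xz}}\}$.

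First I would dispose of the three two-local products, which are simultaneously diagonalized by the Bell basis since each $\ket{\Phi_{xz}}$ is a common eigenvector. Reading off the eigenvalue patterns gives $X\otimes X=\Phi_{00}-\Phi_{01}+\Phi_{10}-\Phi_{11}$, $Y\otimes Y=-\Phi_{00}+\Phi_{01}+\Phi_{10}-\Phi_{11}$, and $Z\otimes Z=\Phi_{00}+\Phi_{01}-\Phi_{10}-\Phi_{11}$, alongside $\mathbb 1\otimes\mathbb 1=\sum_{x,z}\Phi_{xz}$. Substituting these into the first, second, and fourth terms of Eq.~\eqref{eq:ele_adc} contributes only diagonal populations; in particular $X\otimes X-Y\otimes Y=2(\Phi_{00}-\Phi_{01})$, so the damping-induced anisotropy feeds entirely into the $\Phi_{00}$ and $\Phi_{01}$ populations at this stage.

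The delicate step --- and the one I expect to be the main obstacle --- is the pair of single-sided operators $\mathbb 1\otimes Z$ and $Z\otimes\mathbb 1$, which are the only off-diagonal pieces in the Bell basis and where sign bookkeeping is easy to get wrong. Evaluating their action carefully, both exchange $\ket{\Phi_{00}}\leftrightarrow\ket{\Phi_{01}}$ and $\ket{\Phi_{10}}\leftrightarrow\ket{\Phi_{11}}$, but with opposite relative signs on the $\{\ket{\Phi_{10}},\ket{\Phi_{11}}\}$ block, giving $\mathbb 1\otimes Z=\Phi_{01}^{00}+\Phi_{00}^{01}-\Phi_{10}^{11}-\Phi_{11}^{10}$ and $Z\otimes\mathbb 1=\Phi_{01}^{00}+\Phi_{00}^{01}+\Phi_{10}^{11}+\Phi_{11}^{10}$. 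In the symmetric combination that appears in Eq.~\eqref{eq:ele_adc} the $\Psi$-sector coherences cancel, leaving $\mathbb 1\otimes Z+Z\otimes\mathbb 1=2(\Phi_{01}^{00}+\Phi_{00}^{01})$. This cancellation is precisely what collapses all off-diagonal content onto the single $\Phi_{00}$--$\Phi_{01}$ coherence, and verifying it correctly is the crux of the lemma.

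Finally I would collect coefficients. The off-diagonal prefactor is $\tfrac14\cdot 2(1-a)=(1-a)/2=c_3$. Writing $b\equiv(1-a)^2+a^2$, the diagonal weights come out as $\tfrac14(1+2a+b)=(1+a^2)/2=c_1$ for $\Phi_{00}$, $\tfrac14(1-2a+b)=(1-a)^2/2=c_2$ for $\Phi_{01}$, and $\tfrac14(1-b)=a(1-a)/2$ for each of $\Phi_{10}$ and $\Phi_{11}$. A one-line identity, $a(1-a)/2=\tfrac12(1-c_1-c_2)$, then simultaneously confirms $\operatorname{Tr}\sigma_{PQ}=1$ and recovers the stated weight $2^{-1}(1-\sum_{i=1}^2 c_i)$ on $\Phi_{10}+\Phi_{11}$. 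Substituting $a=(1-\gamma)^{m_1}$ completes the proof.
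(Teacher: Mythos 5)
Your proof is correct and takes essentially the same route as the paper's own proof in Appendix~\ref{app:noise_on_pauli_basis}: both start from the Pauli-basis form of $\sigma_{PQ}(m_1)$ and perform a term-by-term change to the Bell basis, with the same key identity $\mathbb 1\otimes Z+Z\otimes\mathbb 1=2\bigl(\Phi_{01}^{00}+\Phi_{00}^{01}\bigr)$ (equivalently, the cancellation of the $\Psi$-sector coherences, which the paper records via the companion identity $Z\otimes\mathbb 1-\mathbb 1\otimes Z=2\bigl(\Phi_{10}^{11}+\Phi_{11}^{10}\bigr)$) carrying the crux of the argument. Your coefficient bookkeeping, including the check $a(1-a)/2=2^{-1}(1-c_1-c_2)$ for the weight on $\Phi_{10}+\Phi_{11}$, matches the paper's exactly.
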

The details of the proof of Lemma~\ref{lem:adc_link_bell} are given in Appendix~\ref{app:noise_on_pauli_basis}.
Notice that, unlike the case for Pauli channels, where the noise-affected state remains Bell diagonal, $\sigma_{PQ}(\{c_i\}_{i=1}^3)$ is block-diagonal in the Bell basis with three parameters $\{c_i\}_{i=1}^3$. Further, any link in the AQN, characterized by four coefficients has the following structures: the first, second, third, and the fourth coefficients of any link correspond to the terms $\Phi_{00},\Phi_{01},(\Phi_{01}^{00}+\Phi_{00}^{01})$, and $(\Phi_{10}^{11}+\Phi_{11}^{10})$, respectively. The first block consists of \(\Phi_{00}\) and \(\Phi_{01}\) states with the diagonal coefficients \(c_1\) and \(c_2\) and the equal off-diagonal ones, \(c_3\). The second block is diagonal, having equal coefficient $2^{-1}(1-c_1-c_2)$.

\subsection{Two elementary level AD noisy links and their swap rule}
Let us consider two noisy elementary links, \(\sigma_{PQ_1}\) and \(\sigma_{Q_2R}\), each defined by its three parameters, as discussed in Lemma~\ref{lem:adc_link_bell}. Now the usual entanglement swapping operation performed on \(Q_1\) and \(Q_2\) of \(\rho_{PQ_1Q_2R}=\sigma_{PQ_1}\otimes \sigma_{Q_2R}\), producing \(\rho_{PR}\) can be represented as \cite{Khatri2022}
\begin{eqnarray}
    \nonumber &&\mathcal S_{PQ_1Q_2R\to PR}(\rho_{PQ_1Q_2R})\\\nonumber&&=\sum_{x,z=0}^1(\bra{\Phi_{xz}}_{Q_1 Q_2}\otimes \mathsf C_R)(\rho_{PQ_1Q_2R})(\ket{\Phi_{xz}}_{Q_1 Q_2}\otimes  \mathsf C_R^T),\\
\end{eqnarray}
where the Bell basis measurement (BSM) is performed on the qubits \(\{Q_1,Q_2\}\) and $\mathsf C_R= X^x_RZ^z_R$ is the local unitary operation, applied to the end qubit $R$ based on the measurement outcomes, and $(\cdot)^T$ is the transpose operation. 
\begin{lemma}
After performing the swapping operation $ \mathcal S_{PQ_1Q_2R\to PR}$  on two elementary level noisy links, \(\sigma_{PQ_1}(\{c_i\}_{i=1}^3)\) and \(\sigma_{Q_2R}(\{d_i\}_{i=1}^3)\), the resultant state \(\rho_{PR}(\{f_i\}_{i=1}^4)\) belongs to a four-parameter family, given by
\begin{eqnarray}
    &&\nonumber \rho_{PR}(\{f_i\}_{i=1}^4)\\\nonumber &&=f_1\Phi_{00}+f_2\Phi_{01}+f_3(\Phi_{01}^{00}+\Phi_{00}^{01})\\\nonumber&&~~~~+2^{-1}(1-\sum_{i=1}^2f_i)(\Phi_{10}+\Phi_{11})+f_4(\Phi_{11}^{10}+\Phi_{10}^{11}),\\
    \label{eq:1st_swap_state}
\end{eqnarray}
where the coefficients  are
\begin{eqnarray}
  \nonumber &&f_{k}=\frac{1}{4}\bigg(1+ \big(2\sum_{i=1}^2c_i-1\big)\big(2\sum_{i=1}^2d_i-1\big) +2(-1)^{k-1}  \mathbf{c}\mathbf{d}\bigg)\\\nonumber&&\text{when } k=1,2, \text{   and}\nonumber\nonumber\\&&f_3=c_3\sum_{i=1}^2d_i;~~~~f_4=c_3(1-\sum_{i=1}^2d_i),
  \label{eq:1st_swap_update_coeff}
\end{eqnarray}
\text{with} \(\mathbf{c}=c_1-c_2,\quad \mathbf{d}=d_1-d_2\).
    \label{th:swap_ele}
\end{lemma}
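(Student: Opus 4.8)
The plan is to evaluate the map $\mathcal{S}_{PQ_1Q_2R\to PR}$ directly on the tensor product $\rho_{PQ_1Q_2R}=\sigma_{PQ_1}\otimes\sigma_{Q_2R}$ by exploiting linearity together with the way the Bell-state measurement on $Q_1Q_2$ acts on Bell operators. First I would expand both inputs into their diagonal pieces (the projectors $\Phi_{00},\Phi_{01},\Phi_{10},\Phi_{11}$) and their single off-diagonal piece $(\Phi_{01}^{00}+\Phi_{00}^{01})$, so that $\sigma_{PQ_1}\otimes\sigma_{Q_2R}$ becomes a sum of elementary terms $\big(\ketbra{\Phi_{ab}}{\Phi_{a'b'}}\big)_{PQ_1}\otimes\big(\ketbra{\Phi_{cd}}{\Phi_{c'd'}}\big)_{Q_2R}$ with coefficients built from $\{c_i\}$ and $\{d_i\}$. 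The heart of the calculation is a single swap identity for these elementary tensors: for each of them I would compute $\sum_{x,z}(\bra{\Phi_{xz}}_{Q_1Q_2}\otimes\mathsf{C}_R)\big(\ketbra{\Phi_{ab}}{\Phi_{a'b'}}_{PQ_1}\otimes\ketbra{\Phi_{cd}}{\Phi_{c'd'}}_{Q_2R}\big)(\ket{\Phi_{xz}}_{Q_1Q_2}\otimes\mathsf{C}_R^{T})$, which collapses onto a single Bell operator on $PR$ (with a definite label and sign) once one uses $\bra{\Phi_{xz}}_{Q_1Q_2}(\ket{\Phi_{ab}}_{PQ_1}\ket{\Phi_{cd}}_{Q_2R})\propto\ket{\Phi_{\bullet}}_{PR}$ and the correction $\mathsf{C}_R=X_R^xZ_R^z$ to realign the teleported label so that the four outcomes add coherently.

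Second, I would feed the diagonal-times-diagonal terms through this identity. The Bell labels combine by bitwise XOR ($\oplus$), so after the correction the diagonal weights of the two links convolve. Collecting the total weight landing on $\Phi_{00}$ and on $\Phi_{01}$ should reproduce the stated $f_1,f_2$: the product $(2\sum_i c_i-1)(2\sum_i d_i-1)$ is the block-weight contribution, while the $\pm\mathbf{c}\mathbf{d}$ term, with $\mathbf{c}=c_1-c_2$ and $\mathbf{d}=d_1-d_2$, is the coherent interference between $\Phi_{00}$ and $\Phi_{01}$ inside each first block. The remaining diagonal weight is then forced to be $\tfrac12(1-f_1-f_2)$ on $\Phi_{10},\Phi_{11}$ by trace preservation.

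Third, and this is where I expect the main difficulty, I would track the off-diagonal contributions. The coherence $c_3(\Phi_{01}^{00}+\Phi_{00}^{01})$ of link $1$ must be read out onto $PR$, while link $2$ enters only through its diagonal block weight: when $\sigma_{Q_2R}$ contributes from its first block (total weight $\sum_{i=1}^2 d_i$) the coherence remains in the first output block, giving $f_3=c_3\sum_i d_i$, whereas from its second block (weight $1-\sum_i d_i$) it is routed to the second output block, giving $f_4=c_3(1-\sum_i d_i)$. The subtle point is to show that the off-diagonal coefficient $d_3$ of link $2$ does \emph{not} survive; it should be annihilated once the four BSM outcomes are summed with the $R$-side corrections $\mathsf{C}_R$, which is the algebraic origin of the asymmetry between the two links in the final formula. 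Establishing this cancellation and confirming the correct $f_3$ versus $f_4$ routing is the crux, while everything else is bookkeeping. Finally I would verify $f_1+f_2+2\cdot\tfrac12(1-f_1-f_2)=1$, confirming that $\rho_{PR}$ is a legitimate four-parameter state of the claimed block-diagonal form.
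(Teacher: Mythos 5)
Your proposal is correct, and it takes a genuinely different route from the paper's. The paper (Appendix~\ref{app:swap_1st_round}) never expands the input states into elementary Bell dyads: it writes both noisy links in the \emph{Pauli operator} basis, groups the product $\sigma_{PQ_1}(m_1)\otimes\sigma_{Q_2R}(m_2)$ into Pauli strings $\mathcal S_1=L_a\otimes L_b\otimes\mathbb 1\otimes\mathbb 1$, $\mathcal S_2=ab\,X^{\otimes4}$, $\mathcal S_3=ab\,Y^{\otimes4}$, $\mathcal S_4=M_a\otimes M_b\otimes Z^{\otimes2}$, and evaluates the BSM through the sign patterns $D(l,m,n,q)$; the coefficients $f_i$ first appear as functions of the scalars $a=(1-\gamma)^{m_1}$, $b=(1-\gamma)^{m_2}$ and are only afterwards rewritten in terms of $\{c_i\},\{d_i\}$. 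Your Bell-dyad expansion with the XOR label rule and outcome-phase bookkeeping instead proves the formula directly for \emph{arbitrary} coefficients in the three-parameter family, which is slightly more general than the paper's derivation (tied to the elementary-noise parametrization) and is essentially the machinery one needs anyway for the general swap rule of Theorem~\ref{th:gen_swap_rule}. It also makes the origin of the asymmetry explicit at the level of the proof: the $d_3$-bearing terms cancel because the correction $\mathsf C_R$ acts on link~2's free end (forward classical communication), a point the paper only records a posteriori in the Note following Theorem~\ref{th:gen_swap_rule}. What the paper's route buys in exchange is immunity from per-outcome sign bookkeeping, since Pauli strings behave multiplicatively and sign-transparently under Bell projections.

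Two small corrections to your bookkeeping, neither fatal. First, the $\pm\,\mathbf{c}\mathbf{d}$ term in $f_{1,2}$ is \emph{not} interference between coherences: the diagonal--diagonal convolution alone produces it, via $c_1d_1+c_2d_2=\tfrac12\big[(c_1+c_2)(d_1+d_2)+\mathbf{c}\mathbf{d}\big]$. The genuine coherence--coherence cross terms, e.g.\ $c_3d_3\,\big(\Phi^{00}_{01}\big)_{PQ_1}\otimes\big(\Phi^{01}_{00}\big)_{Q_2R}$, do land on the \emph{output diagonal} under the XOR rule, so your cancellation argument in step three must cover them as well, not only the (diagonal of link~1)$\times d_3$ terms; since both classes carry a factor $d_3$, showing that every $d_3$-bearing term vanishes under the outcome sum suffices and matches the final formula. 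Second, trace preservation fixes only the \emph{sum} of the $\Phi_{10}$ and $\Phi_{11}$ weights; their equality (needed for the stated form $2^{-1}(1-f_1-f_2)(\Phi_{10}+\Phi_{11})$) follows from the convolution itself, because both inputs carry equal weights on their second blocks — and, relatedly, your final normalization check $f_1+f_2+2\cdot\tfrac12(1-f_1-f_2)=1$ is a tautology rather than a consistency test.
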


The detailed proof of the Lemma~\ref{th:swap_ele} is given in Appendix~\ref{app:swap_1st_round}. 

Simplifying further, we can write $\{f_i\}_{i=1}^4$ as \( f_1=\frac{1}{4}(1+2ab+\mathbb{ab}), \;
f_2=\frac{1}{4}(1-2ab+\mathbb{ab}), \; f_3=\frac{1}{4}(1-a)(1+\mathbb{b}), \;
f_4=\frac{1}{4}(1-a)(1-\mathbb{b})\) with $a=(1-\gamma)^{m_1}$, $b=(1-\gamma)^{m_2}$, \(\mathbb{a}=a^2+(1-a)^2\) and  \(\mathbb{b}=b^2+(1-b)^2\). The above analysis shows that entanglement swapping of noisy elementary links produces states which are outside the three-parameter family of states mentioned in Lemma~\ref{lem:adc_link_bell}. Thus, the swap update rule (Lemma~\ref{th:swap_ele}) for two ADC-affected elementary links is distinct from the swap update rule of the Pauli noise scenario, where the age (single parameter) of the swapped link just becomes the sum of ages of the consumed links, i.e., Bell diagonal states remain Bell diagonal under entanglement swapping. This is not the case for ADC noise-affected states. Building on these observations leads us to the question of the general structure of these states, produced at an arbitrary time step. 

\subsection{General noise update and swap rule in AQN}
\label{sec:rules}
Let us first derive the update rule of the four-parameter virtual link
$\sigma_{PQ}(\{f_i\}_{i=1}^4)$ generated during the swap of some elementary noisy links, distributed by AD noise.
\begin{theorem}
(\textbf{AD noise update rule}) The action of ADC $n_1$ times on the link $\sigma_{PQ}(\{f_i\}_{i=1}^4)$ transforms the state to $\sigma_{PQ}(\{\tilde f_i\}_{i=1}^4)$ where the $f_i$s are updated to $\tilde f_i$s according to
\begin{eqnarray}
    \nonumber \tilde f_k&=&\frac{1}{4}[f_1(1+(-1)^{k-1}2 t+\mathbb t_+)+f_2(1+(-1)^k2t+\mathbb t_+)\\\nonumber &&+2f_3(1+\mathbb t_{-})+4f_4t(1-t)]\text{ for }k=1,2;\\
    \tilde f_3&=&\frac{1}{2}(1-t)+tf_3; ~~~~~~~\tilde f_4=f_4t,
    \label{eq:adc_general_state_update_rule}
\end{eqnarray}
with \(t=(1-\gamma)^{n_1}\),  \(\mathbb{t_{\pm}}=(1-t)^2\pm t^2\).
\label{th:arb_noise}
\end{theorem}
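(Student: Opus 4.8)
The plan is to carry out the whole computation in the two-qubit Pauli basis, where the amplitude damping channel acts diagonally on $X,Y,Z$ and only almost-diagonally on $\mathbb{1}$, and to translate back to the Bell basis only at the very end. The first step is to record the single-qubit action of the $n_1$-fold channel. Since a composition of amplitude damping channels is again an amplitude damping channel with $1-\gamma\to(1-\gamma)^{n_1}=t$, one has $\mathcal N^{\circ n_1}(\mathbb{1})=\mathbb{1}+(1-t)Z$, $\mathcal N^{\circ n_1}(X)=\sqrt{t}\,X$, $\mathcal N^{\circ n_1}(Y)=\sqrt{t}\,Y$, and $\mathcal N^{\circ n_1}(Z)=t\,Z$, i.e.\ exactly the elementary-link formulas with $a\to t$.

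Next I would rewrite the four-parameter state $\sigma_{PQ}(\{f_i\}_{i=1}^4)$ of Eq.~\eqref{eq:1st_swap_state} in the Pauli basis. Using $\Phi_{00}\pm\Phi_{01}$, together with $\Phi_{10}+\Phi_{11}=\tfrac12(\mathbb{1}\otimes\mathbb{1}-Z\otimes Z)$, $\Phi_{01}^{00}+\Phi_{00}^{01}=\tfrac12(\mathbb{1}\otimes Z+Z\otimes\mathbb{1})$, and $\Phi_{11}^{10}+\Phi_{10}^{11}=\tfrac12(Z\otimes\mathbb{1}-\mathbb{1}\otimes Z)$, the state collapses onto just six Pauli operators: $\mathbb{1}\otimes\mathbb{1}$, $X\otimes X$, $Y\otimes Y$, $Z\otimes Z$, $Z\otimes\mathbb{1}$, and $\mathbb{1}\otimes Z$, with the $X\otimes X$ and $Y\otimes Y$ coefficients equal and opposite. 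Identifying the four-parameter Bell-basis family with this six-dimensional affine slice of operator space is the structural heart of the statement.

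I would then push $\mathcal N^{\circ n_1}\otimes\mathcal N^{\circ n_1}$ through term by term. The crucial point — and the step I expect to be the main obstacle — is closure under the map: the diagonal images scale $X\otimes X$ and $Y\otimes Y$ by $t$ and $Z\otimes Z$ by $t^2$, while the single non-unital piece $\mathbb{1}\mapsto\mathbb{1}+(1-t)Z$ feeds $\mathbb{1}\otimes\mathbb{1}$, $Z\otimes\mathbb{1}$ and $\mathbb{1}\otimes Z$ back into $Z\otimes\mathbb{1}$, $\mathbb{1}\otimes Z$ and $Z\otimes Z$ through the $(1-t)$ and $t(1-t)$ factors, yet never generates any operator outside the original six (in particular no $X\otimes\mathbb{1}$, $\mathbb{1}\otimes X$ or $X\otimes Y$ terms appear). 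Verifying this closure, and correctly bookkeeping the non-unital cross terms that populate $Z\otimes Z$ and the $Z\otimes\mathbb{1}$, $\mathbb{1}\otimes Z$ sectors, is the only delicate part of the argument; it is precisely what produces the combinations $\mathbb{t}_{\pm}=(1-t)^2\pm t^2$.

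Finally I would collect the transformed Pauli coefficients and invert the dictionary of the second step to read off $\{\tilde f_i\}$: $\tilde f_3$ and $\tilde f_4$ follow immediately from the $Z\otimes\mathbb{1}\pm\mathbb{1}\otimes Z$ sectors, while $\tilde f_1$ and $\tilde f_2$ come from the $X\otimes X$ and $Z\otimes Z$ coefficients together with the fixed trace. As a consistency check I would specialize to the elementary link, $f_1=c_1,\,f_2=c_2,\,f_3=c_3,\,f_4=0$, and confirm that the update reproduces Lemma~\ref{lem:adc_link_bell} with $a\to at$, i.e.\ that applying the noise $n_1$ further times coincides with having aged a fresh link $m_1+n_1$ times.
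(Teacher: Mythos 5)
Your proposal is correct and follows essentially the same route as the paper's own proof (Appendices~\ref{app:noise_on_pauli_basis} and~\ref{app:gen_reule}): expand the four-parameter family over the closed six-operator Pauli slice $\{\mathbb 1\otimes\mathbb 1,\, X\otimes X,\, Y\otimes Y,\, Z\otimes Z,\, \mathbb 1\otimes Z,\, Z\otimes\mathbb 1\}$, apply $\mathcal N^{\circ n_1}(\mathbb 1)=\mathbb 1+(1-t)Z$, $\mathcal N^{\circ n_1}(X)=\sqrt t\,X$, $\mathcal N^{\circ n_1}(Y)=\sqrt t\,Y$, $\mathcal N^{\circ n_1}(Z)=t\,Z$ factorwise, and invert the same Bell--Pauli dictionary. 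One substantive remark: do carry out your proposed semigroup check rather than treating it as optional, because faithful bookkeeping of the $Z\otimes Z$ sector produces, as in Appendix~\ref{app:gen_reule}, the terms $(1-f_1-f_2)(1+\mathbb t_-)+4f_3\,t(1-t)$ inside $\tilde f_{1,2}$ (there written $2f_5(1+\mathbb t_-)+4f_3t(1-t)$ with $f_5=\tfrac12(1-f_1-f_2)$), not the $2f_3(1+\mathbb t_-)+4f_4\,t(1-t)$ printed in the theorem statement --- a labeling slip your check exposes, since for the fully damped link ($a=0$, i.e.\ $f_1=f_2=f_3=\tfrac12$, $f_4=0$, the ADC fixed point $\ketbra{00}$) the printed formula yields $\tilde f_1+\tilde f_2=1+(1-t)^2>1$, while the appendix update correctly returns $\tilde f_i=f_i$.
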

Therefore, starting from the four-parameter states defined in Lemma \ref{th:swap_ele}, the presence of AD noise for an arbitrary number of time steps updates the coefficients by the rule given in Eq.~\eqref{eq:adc_general_state_update_rule}, keeping the form of the quantum state preserved. 

Next, we calculate the SWAP rule for two four-parameter links.
\begin{theorem}
   (\textbf{Swap rule with AD noise}) Consider two four-parameter links $\sigma_{PQ_1}(\{f_i\}_{i=1}^4)$ and $\sigma_{Q_2R}(\{g_i\}_{i=1}^4)$, and by performing entanglement swap operation $\mathcal S_{PQ_1Q_2R\to PR}$, a virtual link $\rho_{PR}(\{h_i\}_{i=1}^4)$ is created which stays in the four-parameter family with $h_i$s being calculated as 
    \begin{eqnarray}
        \nonumber h_k &=& 4(s_1+s_2+(-1)^{k-1}2s_3)\quad \text{for }k=1,2;\\
        \nonumber
        h_l &=& 4(s_4 + (-1)^l s_5)\quad \text{for }l=3,4.
    \end{eqnarray}
    Here, \( s_1 = \frac{1}{16},
   s_{2} = \frac{1}{16}[1 +2(f_1+f_2)][1 +2(g_1+g_2)],
   s_3 = \frac{1}{16}[(f_{1} - f_{2})(g_{1} - g_{2})],
   s_{4} = \frac{1}{8}(f_{3} + f_{4}),\) \text{and} \( 
 s_5 = \frac{1}{8}(f_{3} - f_{4})[1 +2(g_1+g_2)]\).
 \label{th:gen_swap_rule}
\end{theorem}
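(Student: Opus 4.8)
The plan is to evaluate $\mathcal S_{PQ_1Q_2R\to PR}$ directly on the product input $\rho_{PQ_1Q_2R}=\sigma_{PQ_1}(\{f_i\}_{i=1}^4)\otimes\sigma_{Q_2R}(\{g_i\}_{i=1}^4)$, using linearity of the map to reduce the entire computation to a single transfer rule at the level of Bell-basis operators. First I would write both four-parameter links explicitly in the Bell basis via the form in Eq.~\eqref{eq:1st_swap_state}, so that the tensor product becomes a finite sum of terms $\ketbra{\Phi_\alpha}{\Phi_\beta}_{PQ_1}\otimes\ketbra{\Phi_\gamma}{\Phi_\delta}_{Q_2R}$, where the labels are confined to the blocks $\{00,01\}$ and $\{10,11\}$ inherited from the input structure.

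The central step is to establish, for arbitrary Bell labels, the image of a single operator product $\ketbra{\Phi_\alpha}{\Phi_\beta}_{PQ_1}\otimes\ketbra{\Phi_\gamma}{\Phi_\delta}_{Q_2R}$ under $\mathcal S_{PQ_1Q_2R\to PR}$. I would derive this from the transpose/transfer identity for the maximally entangled state, together with the covariance $\ket{\Phi_{xz}}=(\mathbb 1\otimes X^xZ^z)\ket{\Phi_{00}}$ and the relation that $X^aZ^b\ket{\Phi_{cd}}$ is again a Bell state with labels $(c\oplus a,\,d\oplus b)$ up to a phase. After applying the outcome-$(x,z)$ BSM and the matched correction $\mathsf C_R=X^x_RZ^z_R$, each product collapses to a single output operator $\ketbra{\Phi_{\alpha\oplus\gamma}}{\Phi_{\beta\oplus\delta}}_{PR}$ carrying an outcome-dependent phase; summing over the four outcomes, its surviving weight is proportional to $\sum_{x,z}e^{i\Delta\phi(x,z)}$, which evaluates to either full reinforcement or exact cancellation. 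This phase sum is the gatekeeper that fixes the final structure.

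With this rule in hand I would organize the contributions by sector. For diagonal$\,\times\,$diagonal terms the Bell labels add modulo two, so the output label is the componentwise sum $\alpha\oplus\gamma$; setting $f_3=f_4=g_3=g_4=0$ reduces both inputs to Bell-diagonal states and this sector reproduces exactly the combinatorics of Lemma~\ref{th:swap_ele}, fixing $h_1,h_2$ and, in particular, forcing $\Phi_{10}$ and $\Phi_{11}$ to share the common weight $\tfrac12(1-h_1-h_2)$ so that only four free parameters remain. For the mixed coherence$\,\times\,$diagonal terms I would track how a within-block coherence of one link (carried by $f_3$ or $f_4$) is routed by the block weights $g_1+g_2$ and $1-(g_1+g_2)$ of the other link into the output within-block coherences, collecting them into $h_3(\Phi_{01}^{00}+\Phi_{00}^{01})$ and $h_4(\Phi_{11}^{10}+\Phi_{10}^{11})$. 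Reading off the surviving real coefficients and grouping them into $s_1,\dots,s_5$ then yields $h_k=4(s_1+s_2+(-1)^{k-1}2s_3)$ and $h_l=4(s_4+(-1)^l s_5)$.

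I expect the main obstacle to be the coherence bookkeeping rather than the diagonal sector. One must carry the outcome-dependent phases precisely and verify, through the sum over the four measurement outcomes, that every operator which would leave the four-parameter family cancels identically: the cross-block coherences between the $\{\Phi_{00},\Phi_{01}\}$ and $\{\Phi_{10},\Phi_{11}\}$ subspaces, the asymmetric (imaginary) parts of the within-block coherences, and crucially the coherence$\,\times\,$coherence products that could in principle feed the diagonal but must not, since $h_1,h_2$ are asserted to depend only on $f_1,f_2,g_1,g_2$. Establishing these cancellations, together with the equality of the $\Phi_{10}$ and $\Phi_{11}$ weights, is the genuine content of the theorem; once closure within the family is confirmed, the explicit values of $h_1,\dots,h_4$ follow by simply collecting coefficients.
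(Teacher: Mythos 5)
Your route is sound and genuinely different from the paper's. The paper proves the elementary case (Appendix~\ref{app:swap_1st_round}) by expanding both links in the Pauli operator basis, isolating the terms relevant to the BSM at $Q_1Q_2$ (the $\mathcal S_i$ and $D(l,m,n,q)$ bookkeeping), and evaluating the channel directly; Appendix~\ref{app:gen_reule} then presents the general rule as the output of the same brute-force computation, with no structural explanation of why the four-parameter family is closed. You instead work at the level of Bell-basis operators via a single transfer rule $\ketbra{\Phi_\alpha}{\Phi_\beta}\otimes\ketbra{\Phi_\gamma}{\Phi_\delta}\mapsto(\text{phase sum})\,\ketbra{\Phi_{\alpha\oplus\gamma}}{\Phi_{\beta\oplus\delta}}$, with survival decided by whether the outcome-dependent sign is constant over the four BSM outcomes. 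This is correct: carrying out the bookkeeping with $\ket{\Phi_{xz}}=(\mathbb 1\otimes X^xZ^z)\ket{\Phi^+}$ and the ricochet identity gives, for ket labels $(a,b;c,d)$ and bra labels $(a',b';c',d')$, a relative sign $(-1)^{x(d\oplus d')+z(c\oplus c')}$ up to an outcome-independent factor, so a product term survives iff $\gamma=\delta$; componentwise label addition then preserves the within-block difference $(0,1)$, which is precisely the closure claim. Your approach buys what the paper's calculation leaves implicit: a one-line reason why cross-block coherences never appear and why the second link's coherences $g_3,g_4$ drop out entirely, which is the origin of the $f_i\leftrightarrow g_i$ asymmetry the paper flags in its Note.

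Two caveats. First, your cancellation checklist is mis-framed: you list only operators ``which would leave the four-parameter family.'' But the diagonal$(f)\times$coherence$(g)$ sector produces within-block output coherences that lie \emph{inside} the family and must nonetheless vanish identically, or else $h_3,h_4$ would contain $g_3,g_4$, contradicting the stated formulas. They do cancel — $\gamma\neq\delta$ forces an $x$- or $z$-dependent sign in the phase sum — but your sketch silently assumes the asymmetric form of the answer instead of proving it; this verification is part of the genuine content and must be added. Second, you claim that collecting coefficients ``yields'' the printed formulas, but a faithful execution of your own consistency check (set $f_3=f_4=g_3=g_4=0$ and compare with Lemma~\ref{th:swap_ele}) shows the printed factors $[1+2(f_1+f_2)]$, $[1+2(g_1+g_2)]$ in $s_2,s_5$ must read $2(f_1+f_2)-1=1-4f_5$ and $2(g_1+g_2)-1=1-4g_5$ (the appendix's first form $(1-4f_5^1)(1-4g_5^1)$ is the correct one), and that the $(-1)^l$ assignment should attach $4(s_4+s_5)$ to $h_3$, the $(\Phi^{00}_{01}+\Phi^{01}_{00})$ coefficient, giving $h_3=f_3(g_1+g_2)+f_4(1-g_1-g_2)$ in agreement with $f_3=c_3\sum_i d_i$ of Lemma~\ref{th:swap_ele}. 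Your method would surface these sign slips in the statement rather than reproduce them, which is a point in its favor — but as written the proposal asserts agreement it would not literally obtain.
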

See Appendix~\ref{app:gen_reule} for the proofs of Theorem~\ref{th:arb_noise} and \ref{th:gen_swap_rule}.  

In summary, our analysis demonstrates that the most general state in AQN, consisting of an arbitrary number of nodes, is a four-parameter state in contrast with a single-parameter state in the case of the Pauli noise-affected scenario. Hence, we arrive at one of the main results of our work when the links are disturbed by the amplitude damping noise.
\begin{theorem}
    An arbitrary link of AQN can be represented as $h_1\Phi_{00}+h_2\Phi_{01}+h_3(\Phi_{01}^{00}+\Phi_{00}^{01})+2^{-1}(1-h_1-h_2)(\Phi_{10}+\Phi_{11})+h_4(\Phi_{10}^{11}+\Phi_{11}^{10})$ where \(h_i\)s can be written using the noise strength influencing each link.
    \label{th:gen_state_AQN}
\end{theorem}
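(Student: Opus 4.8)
The plan is to prove this as a closure (fixed-point) statement rather than by a fresh computation: all the analytic work is already contained in the preceding lemmas and theorems, so the final result is an induction that assembles them. First I would let $\mathcal{F}$ denote the set of states of the displayed form, parametrized by $(h_1,h_2,h_3,h_4)$; note that fixing the second-block diagonal entries to $2^{-1}(1-h_1-h_2)$ automatically enforces unit trace, so $\mathcal{F}$ is a genuine four-parameter family of valid link states. The goal is to show that every link ever produced anywhere in the AQN, irrespective of its generation history, lies in $\mathcal{F}$.

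The key structural observation is that any link in a linear repeater network is synthesized from elementary links using only two elementary operations: (i) the action of amplitude damping noise for some number of time steps on a stored link, and (ii) the entanglement swap $\mathcal{S}_{PQ_1Q_2R\to PR}$ fusing two links. Thus the generation history of any link is encoded by a binary tree whose leaves are freshly created elementary links, whose internal vertices are swaps, and with noise acting along every edge to represent the memory-storage time of a link before it is consumed. I would run the induction on the height of this tree (equivalently, on the number of swaps). For the base case, Lemma~\ref{lem:adc_link_bell} gives that an elementary $\Phi_{00}$ acted on by any number of noise steps is exactly the element of $\mathcal{F}$ with $h_4=0$. For the inductive step I would invoke the two closure results directly: Theorem~\ref{th:arb_noise} shows $\mathcal{F}$ is invariant under the noise update rule, and Theorem~\ref{th:gen_swap_rule} shows $\mathcal{F}$ is closed under swapping two of its members. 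Composing these, a link obtained by swapping two sub-links of smaller tree height (each in $\mathcal{F}$ by the induction hypothesis) and then subjecting the result to noise again lies in $\mathcal{F}$, which closes the induction. The explicit dependence of the $h_i$ on the per-link noise strengths is then read off by unrolling the recursion: each leaf contributes its damping factor $(1-\gamma)^{m}$ via Lemma~\ref{lem:adc_link_bell}, and the noise and swap rules propagate these factors up the tree.

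The point requiring care --- more a matter of completeness than a genuine obstacle --- is checking that operations (i) and (ii) really exhaust every way a link can arise, so that the induction covers all cases, and that $\mathcal{F}$ rather than some narrower ansatz is the correct closed family. Here I would emphasize that the very first swap already acts on three-parameter leaves yet produces a nonzero $h_4$ through the term $f_4=c_3(1-\sum_{i=1}^2 d_i)$ of Lemma~\ref{th:swap_ele}; this certifies that the three-parameter subfamily is \emph{not} preserved by the dynamics, so $\mathcal{F}$ is minimal. With the base case, the two invariance theorems, and this exhaustiveness check in hand, the induction delivers the claim.
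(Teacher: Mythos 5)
Your proposal is correct and follows essentially the same route as the paper: the paper proves Theorem~\ref{th:gen_state_AQN} precisely by combining Lemma~\ref{lem:adc_link_bell} (elementary links lie in the family with $h_4=0$), Theorem~\ref{th:arb_noise} (closure under the noise update rule), and Theorem~\ref{th:gen_swap_rule} (closure under swapping), exactly the base case and inductive step you assemble. Your explicit induction on the swap tree and the minimality remark via the $f_4=c_3(1-\sum_{i=1}^{2}d_i)$ term of Lemma~\ref{th:swap_ele} merely make formal what the paper leaves implicit.
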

This theorem proves that the number of parameters (four) in the states of the form $\sigma(\{h_i\}_{i=1}^4)$ is preserved under an arbitrary number of noise and swap rounds in AQN. Hence, Theorems~\ref{th:arb_noise} and \ref{th:gen_swap_rule} are indeed the general \textit{ noise update} and \textit{swap rule} in AQN.

\subsection{Fidelity and entanglement of links in AQN}
After obtaining the form of an arbitrary link in the presence of AD noise, it is natural to find its fidelity with the ideal situation in the absence of noise. In particular, an arbitrary link in AQN, represented by $\sigma(\{h_i\}_{i=1}^4)$ has the fidelity $\mathcal F (\sigma(\{h_i\}_{i=1}^4))=\bra{\Phi^+}\sigma(\{h_i\}_{i=1}^4)\ket{\Phi^+} = h_1$, with respect to the maximally entangled Bell state $\ket{\Phi^+}$. Further, the degree of entanglement, measured by the concurrence \cite{Wootters1998} of the state, can be calculated (see Appendix~\ref{app:concurence}) as
\begin{eqnarray}
    &&\nonumber \mathcal C(\sigma(\{h_i\}_{i=1}^4)) \\\nonumber&&= \max\bigg\{0,\bigg|\mathcal F(\sigma(\{h_i\}_{i=1}^4))-h_2\bigg|-\sqrt{(1-\sum_{i=1}^2h_i)^2-4h_4^2}\bigg\}.\\
    \label{eq:conc}
\end{eqnarray}
It is important to note from the above expressions that the entanglement of AQN-links is not uniquely determined by their fidelity which is in sharp contrast with the Pauli-noise setting (see Sec.~\ref{sec:pauli-twirled}). Specifically, when Pauli noise acts on the links, the amount of entanglement of the resulting Bell diagonal states can be uniquely determined by a single scalar parameter, i.e., the fidelity of the state with respect to $\ket{\Phi^+}$ state. Further, Bell diagonal states become entangled only when their fidelity with the target Bell state exceeds $1/2$ \cite{rohde2025quantuminternettechnicalversion}. This motivates us to determine the critical fidelity threshold for AQN that marks the onset of entanglement. It further indicates that amplitude damping noise-affected quantum network has to be benchmarked differently than the method used for the Pauli channels.

With all these regulations proven for AD noise, we are now ready to perform the simulation of an AQN using the noise update and the swap rule, derived in Theorems~\ref{th:arb_noise} and \ref{th:gen_swap_rule}, respectively. Before that, we take a brief digression to introduce the twirled ADC in order to set the stage for comparing the performance of twirled and non-twirled versions of ADC-affected networks.

{\textbf{Note}: By combining Theorem~\ref{th:gen_swap_rule} with Eq.~\eqref{eq:conc}, we observe that the entanglement of the swapped state depends on $h_4$, which is not invariant under the exchange $f_i \leftrightarrow g_i$. Consequently, the direction of classical communication plays a nontrivial role in the swapping operation of AQN. While we consistently adopt the forward classical communication (CC) strategy for applying Pauli corrections based on the outcomes of the BSM, in principle, one should optimize over the CC direction when performing a swap. In contrast, the fidelity remains symmetric under $f_i \leftrightarrow g_i$, thereby reinforcing the necessity of considering entanglement as another benchmark which was not essential for Pauli noise.}

\subsection{Comparing AQN and its Pauli-twirled version} 
\label{sec:pauli-twirled}
The action of the Pauli-twirled ADC (TADC) on a qubit $\varrho$ can be written as
\begin{eqnarray}
    \nonumber \overline{\mathcal N}(\varrho)&=&\frac{1}{4}\big[{\mathcal N}(\varrho) + X\mathcal N(X\varrho X)X + Y\mathcal N(Y\varrho Y)Y \\ &&+ Z\mathcal N(Z\varrho Z)Z\big].
\end{eqnarray}
Taking the coherence time of ADC as $m^\star$ (discrete time model), we can write $\gamma = 1-e^{-1/m^\star}$. Now, $\overline{\mathcal N}$ is a Pauli channel  \cite{Sarvepalli2009}, where
\begin{eqnarray}
    \overline{\mathcal N}(\varrho) = p_I\varrho+p_X X\varrho X + p_Y Y\varrho Y+p_Z Z\varrho Z,
\end{eqnarray}
with 
\begin{eqnarray}
    \nonumber p_I&=&\frac{1+e^{-\frac{1}{2m^\star}}}{2}-\frac{1-e^{-\frac{1}{m^\star}}}{4},\\\nonumber
    p_X &=& \frac{1-e^{-\frac{1}{m^\star}}}{4} = p_Y,\\
    p_Z &=& \frac{1-e^{-\frac{1}{2m^\star}}}{2}-\frac{1-e^{-\frac{1}{m^\star}}}{4}.
\end{eqnarray}

Here on, in a linear-chain quantum network, where we assume that instead of ADC, all the quantum memories are affected by the TADC ($\overline{\mathcal N}$), termed as TAQN. In this setting, the effect of TADC $n_1$ times on $\sigma=\Phi^+$ produces a state $\sigma(n_1)$ given by
\begin{eqnarray}
    \nonumber \sigma(n_1)&=&\frac{1}{4}\bigg[\bigl(1+e^{-2n_1/m^\star}+2e^{-n_1/m^\star}\bigr)\Phi_{00}\\\nonumber
&&+\bigl(1+e^{-2n_1/m^\star}-2e^{-n_1/m^\star}\bigr)\Phi_{01}
\\ &&+\bigl(1-e^{-2n_1/m^\star}\bigr)(\Phi_{10}
+\Phi_{11})\bigg].
\end{eqnarray}
Furthermore, the swap of two links $\sigma_{PQ_1}(n_1)$ and $\sigma_{Q_2R}(n_2)$, where the BSM is performed at memories $\{Q_1, Q_2\}$, generating a new link $\sigma_{PR}(n_1+n_2)$ \cite{schmidt2020, Khatri2022}. Clearly, noise acts additively during the swap operation in TAQN. 

\textit{Entanglement and fidelity.} The fidelity of the link $\sigma(n)$ is given by $\mathcal F(\sigma(n)) = \bigl(1+e^{-2n/m^\star}+2e^{-n/m^\star}\bigr)$. Moreover, the concurrence of the links in TAQN can be calculated as $\mathcal{C}(\sigma(n))=\max\{0,2\mathcal F(\sigma(n))-1\}$. Therefore, the entanglement of the PAQN-links is a linear function of fidelity and is non-vanishing only when $\mathcal{F} > 1/2$, as mentioned above.

\section{Comparison between entanglement distribution policies}
\label{sec:policies}

{In order to distribute entanglement dynamically and to perform numerical simulations to check efficiencies, one needs to include several decisions to be taken at each step of the protocol. This begins with the generation of elementary links, how long should links be kept active (cut-offs) to avoid too much decoherence, when and in which order entanglement swapping should be performed to connect shorter links to form longer virtual links, etc. A series of such actions taken on the quantum network constitutes an entanglement distribution policy. The question of finding optimal entanglement distribution policies has been addressed extensively in the literature \cite{Khatri2022,kamin2023, Iesta2023, guo2025} since it considerably affects the network performance both in terms of rate and fidelity of distributed states.
Nonetheless, as mentioned above, since most entanglement distribution literature have solely focused on Pauli noise, the policies have also been tailored to them. In contrast, we perform simulations for TAQN and AQN networks under some standard entanglement distribution policies and illustrate that their performance differs substantially. 

\begin{figure*}
    \centering
\includegraphics[width=0.8\linewidth]{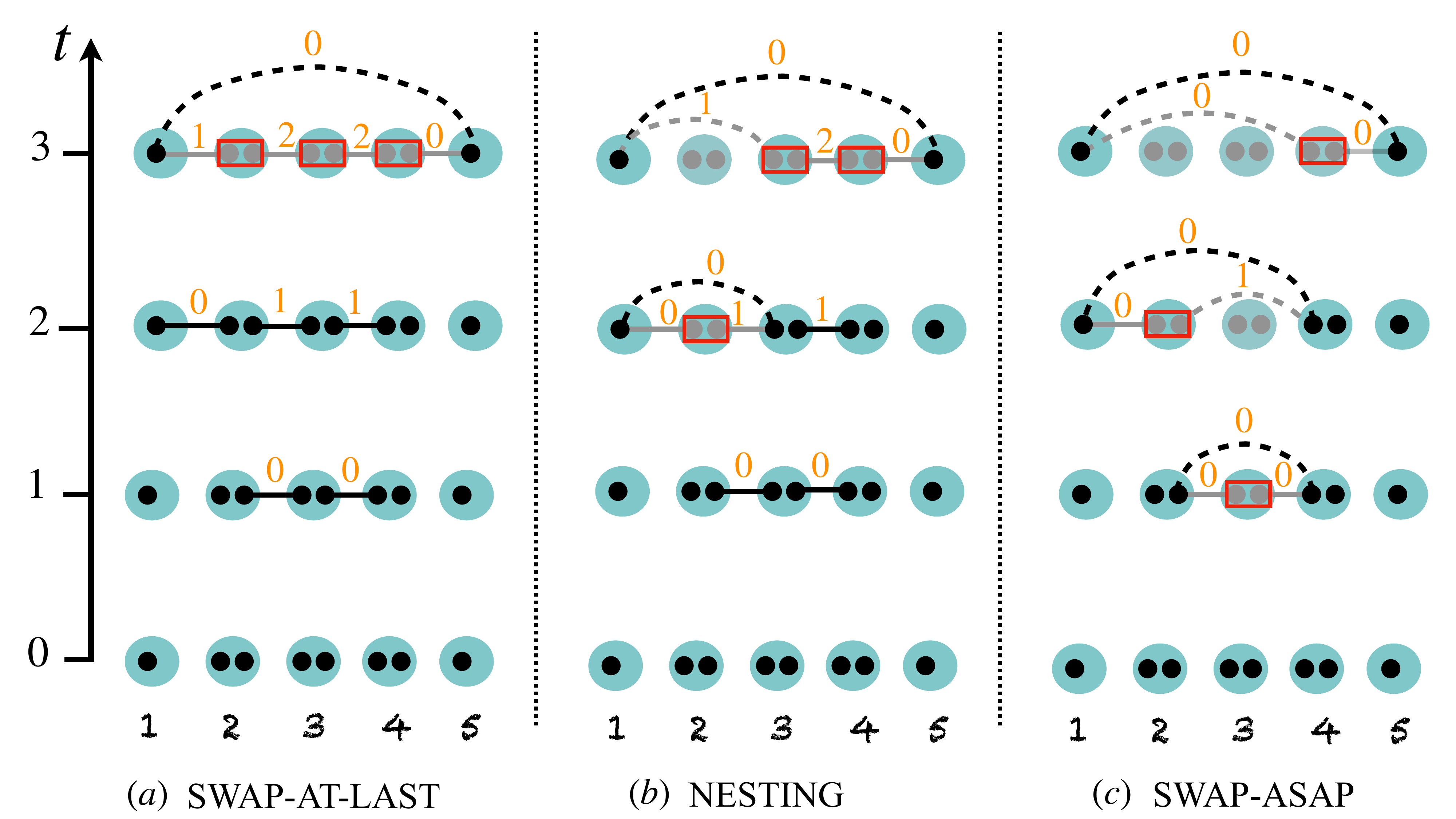}
    \caption{  
\textbf{An instance of three entanglement swapping policies,  \textsc{swap-at-last},  \textsc{nesting}, and  \textsc{swap-asap}, in a five-node linear AQN.} Each intermediate node holds two qubits, and all the swaps are deterministic and instantaneous. Link ages increase by one per time step, while virtual links created via a swap are reset to age \(``0"\). In contrast to PAQN, here, for each link, we need to keep track of both the four parameters of the state, along with its age.  A particular case of all successful link generation is shown, where the end-to-end link is generated through different orders of the swap operation. (a) \textsc{swap-at-last.}  The ages of all the elementary links are recorded till the end, i.e., up to the third time step. In the end, using the noise update rule (Theorem~\ref{th:arb_noise}) and swap rule (Theorem~\ref{th:gen_swap_rule}), the end-to-end link is created, having the age of this link reset to $0$. (b) \textsc{nesting.} Swaps are performed at those nodes where both the memories of the node have links of equal length. (c) \textsc{swap-asap.} The most efficient strategy is to execute a swap at any node as soon as both of its quantum memories are active. Notice that although the waiting time of all the policies is identical, the effect of decoherence on the quality of the end-to-end link should decrease as the policy is improved.
}
\label{fig:polices}
\end{figure*}

In this work, we consider three representative strategies: \textsc{swap-at-last}, \textsc{nesting}, and \textsc{swap-asap} (as depicted in Fig.~\ref{fig:polices}). In our setting, link generation is modeled as probabilistic, and the network is homogeneous, meaning all the elementary links are generated with the same probability \(p_l \in [0,1]\), while entanglement swapping is assumed to be deterministic. For simplicity, we do not include distillation or cutoff strategies in the present analysis. We also ignore classical communication costs, except for the heralding time for elementary link generation, which becomes the simulation time step in our model and thus sets the time scale. We primarily show the effect of different swapping strategies. 
This is motivated by the fact that, unlike single-parameter models such as the PQN, our AQN depends on four parameters, which makes the design of an optimal cutoff policy substantially more involved.

{\it \textsc{swap-at-last}.} In this strategy (Fig.~\ref{fig:polices}(a)), all elementary links between neighboring nodes are first generated probabilistically. Only after the full set of elementary links has been established, entanglement swapping is carried out simultaneously and deterministically at the intermediate nodes, yielding an end-to-end connection.  
Since swap operations are assumed to be instantaneous, no aging occurs during the swapping stage itself. However, every elementary link must wait until the last one is generated, and hence the earliest links in the chain accumulate significant {decoherence} by the time they are consumed. This effect can severely degrade the fidelity and entanglement of the end-to-end connection.

{\it \textsc{nesting}.} An alternative strategy is the \textsc{nesting} protocol, {originally introduced by Briegel et al. \cite{Briegel1998, Dur_1999}} which follows a hierarchical length-doubling procedure. Consider a repeater chain consisting of \(N = 2^n+1\) nodes connected by \(2^n\) elementary links with $n\in\mathbb Z_{>0}$, each of length \(L\). The protocol proceeds as follows:

\(1)\) Elementary links of length \(L\) are first generated probabilistically between neighboring nodes.

\(2)\) If two adjacent links of equal and required length (depending on the \textsc{nesting} level $n$ the node falls into) are simultaneously available at a node, an entanglement swap operation is performed there. Each such swap doubles the effective link length, producing successively longer connections: \(L \to 2L \to 4L \to \cdots \to 2^n L\).  
For a chain of \(2^n+1\) nodes, this process continues until level \(n\), where the final end-to-end link of length \(2^n L\) is obtained.

As a concrete example, consider the case shown in Fig.~\ref{fig:polices}(b), where $n=2$, and the chain consists of \(5\) nodes and \(4\) elementary links of length \(L\).  
At \textsc{nesting} level~1, swaps can occur at nodes \(2\) and \(4\) whenever they have links of length \(L\) on both sides; a swap is performed, producing links of length \(2L\).  
At \textsc{nesting} level~2, a swap at node \(3\) combines these links into the final end-to-end connection of length \(4L\).  
This illustrates the hierarchical doubling rule: at each level, swaps are performed at predetermined nodes that connect pairs of equally long links, until the entire chain is connected.  
Compared to \textsc{swap-at-last}, \textsc{nesting} reduces aging, since links are consumed as soon as their neighboring counterparts of equal and appropriate length become available. 

{\it \textsc{swap-asap}.} A third approach is the swap-as-soon-as-possible (\textsc{swap-asap}) protocol \cite{TJCoopmans2021,Shchukin2019, shchukin2022, kamin2023}. The procedure consists of the following steps:

\((1)\) Elementary links are generated probabilistically between neighboring nodes.

\((2)\) As soon as two adjacent links are simultaneously available, an entanglement swap operation is performed at their common node, regardless of whether the links are of equal length. There are no levels defined for swaps, and any node can perform a swap based on locally available information. Such a policy further reduces decoherence in the network by providing greater freedom in the sequence of swaps, especially in resource constrained scenarios such as low $m^\star$ and/or $p_l$. 

Here, we note that we restrict our simulations to deterministic entanglement swapping. Therefore, the average waiting time required to generate an end-to-end link remains the same for all the policies discussed above; it is the end-to-end fidelity and entanglement that vary depending on the efficiency of the policy, making them the relevant figures of merit for our simulations.

\subsection{Simulation methods: Tracking four-parameter states of AQN}
\label{sec:sim_method}
We employ Monte Carlo simulations to assess the performance of the linear-chain network influenced by amplitude damping noise. In all strategies, swap operations are assumed to be instantaneous and, therefore, do not contribute to the aging of a link. Instead, aging arises only during the waiting period between the generation of a link and its eventual consumption in a swap. Formally, if a link (elementary or virtual) is created at time step \(t_0\) and consumed at time step \(t\), its age is defined as \(t - t_0\) (see Fig.~\ref{fig:polices}). For the TAQN, we use the standard age-based simulation as described for example in Haldar et al. \cite{halder_stav}. Here, we need to track only the age of each link since swapping only leads to the addition of ages. For the AQN, we use the noise update rules and swap rules derived in Sec.~\ref {sec:main_proofs}. Here, the simulation adapts a more complex route. Each active link is represented by not only its age, but also the four parameters that represent its quantum state. The former is necessary to determine the noise updated parameters, whereas the latter allows for the calculation of the post entanglement swapping state. The age of a virtual link freshly produced from a swap is set to `$0$' (unlike the sum of ages of consumed links in the TAQN case), since the effect of noise inherited from the consumed links is already included in the four parameters of the virtual link via the swap rule. The next noise update occurs when this virtual link needs to be subsequently swapped, and the time between successive swaps becomes the age relevant to the noise update rule.
To obtain our figure of merits, such as end-to-end $(1)$ fidelity and $(2)$ entanglement, we average over $5\times 10^4$ independent evolutions of the network starting from the fully disconnected chain and ending at an end-to-end entangled state.


\begin{figure}[h]
    \centering    \includegraphics[width=\linewidth]{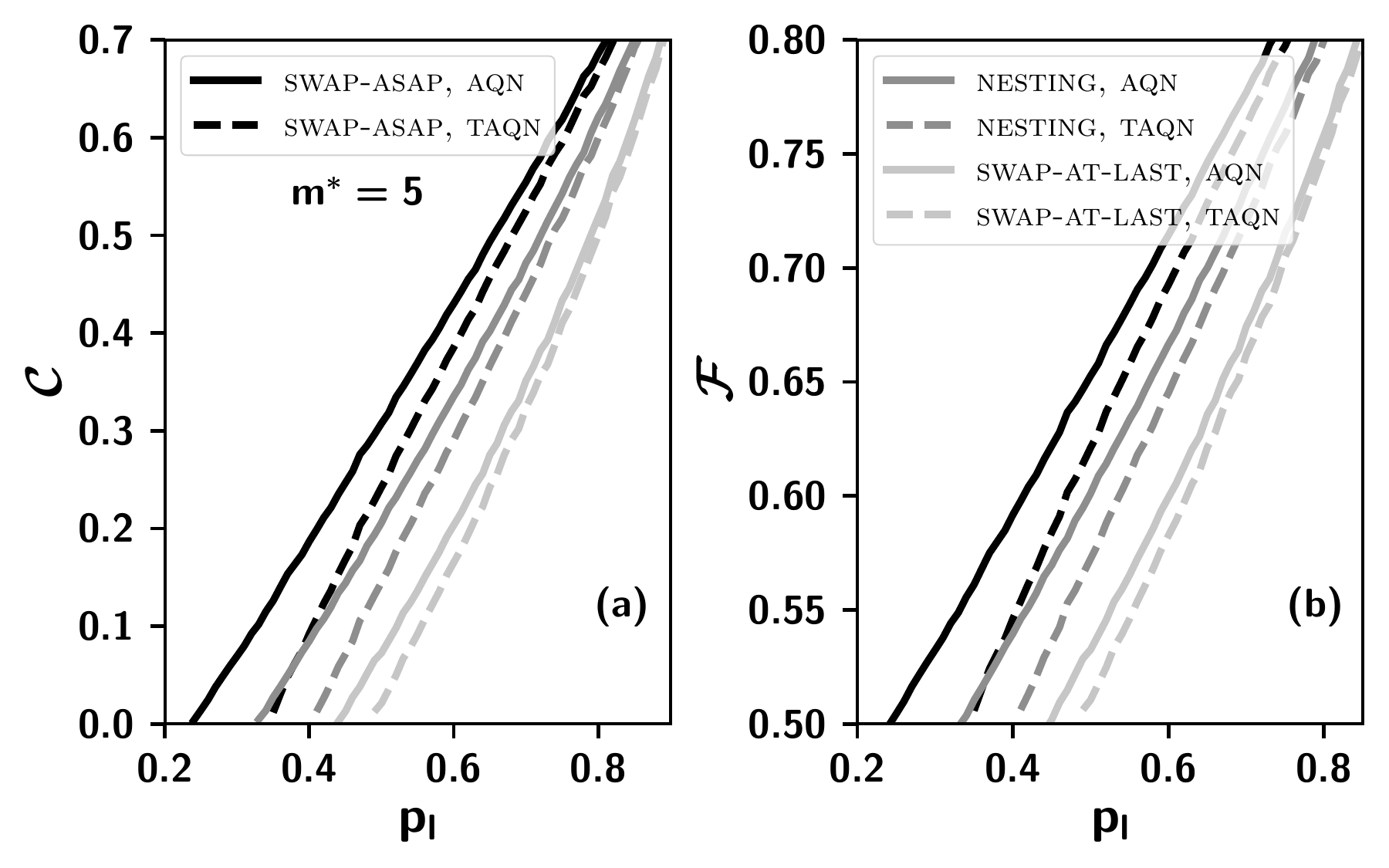}
    \caption{\textbf{Performance of entanglement and fidelity in a five-node repeater chain under \textsc{swap-asap}, \textsc{nesting}, and \textsc{swap-at-last} with fixed coherence time, ${m^\star=5}$.} 
Solid lines correspond to AQN and dashed lines to TAQN, with black for \textsc{swap-asap}, mid-grey for \textsc{nesting}, and light-grey for \textsc{swap-at-last}. 
(a) Concurrence, $\mathcal{C}$ (ordinate) vs. the link generation probability $p_l$ (abscissa): AQN attains non-vanishing entanglement at lower $p_l$ than TAQN. 
(b) Fidelity, $\mathcal{F}$ (ordinate) against $p_l$ (abscissa): AQN crosses the threshold $\mathcal{F}>\tfrac{1}{2}$ earlier than TAQN. 
For both $\mathcal C$ and $\mathcal F$, the performance gap between AQN and TAQN is most significant for \textsc{swap-asap}, moderate for \textsc{nesting}, and least for \textsc{swap-at-last}, highlighting that the advantage of simulating AQN is strongest when the elementary links are distributed under low $p_l$. The standard deviation of the average concurrence and fidelity of the end-to-end link is $\mathcal O(10^{-3})$ throughout our work. Both axes are dimensionless.}
    \label{fig:5node_mstar5}
\end{figure}


\begin{figure}[h]
    \centering
    \includegraphics[width=\linewidth]{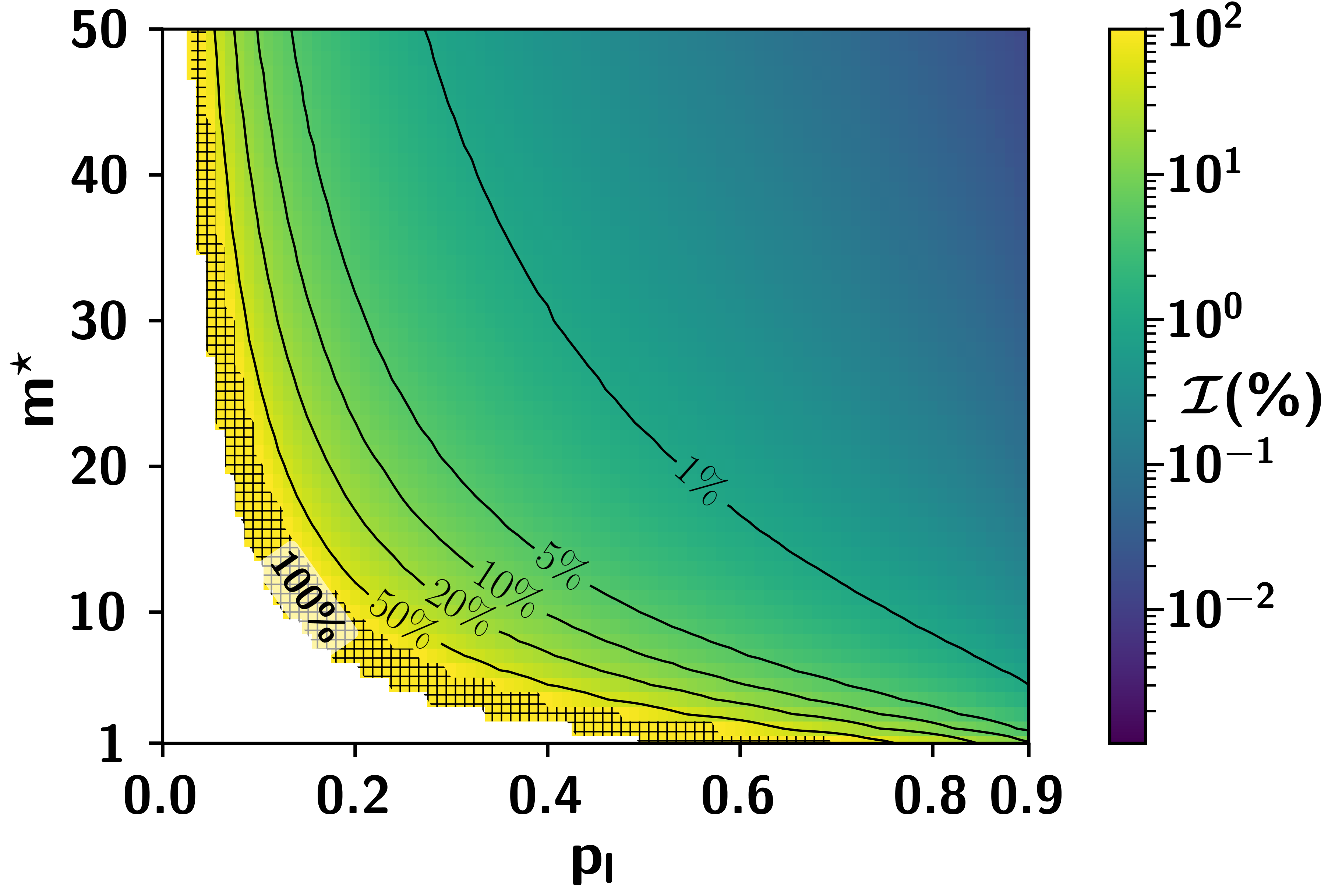}
    \caption{ \textbf{Comparison under the \textsc{swap-asap} policy between AQN and TAQN for a five-node repeater chain.}
The heatmap demonstrates the relative improvement factor,  \(\mathcal{I}\) as a function of coherence time \(m^*\) (vertical axis) and link generation probability \(p_l\) (horizontal axis). The shaded regions denote parameter regimes in which TAQN yields vanishing concurrence, whereas AQN attains a finite value. Consequently, this leads to an \textit{absolute-advantage}, since $\mathcal I = 100\%$ in these cases. As \(m^*\) and \(p_l\) increase, $\mathcal I$ gradually diminishes. Nevertheless, across the entire parameter space of \(m^*\) and \(p_l\), the improvement remains strictly positive. Both axes are dimensionless.}
    \label{fig:5node_mstar_all}
\end{figure}

\subsection{Simulation Results}
\label{sim_res}
We now employ three distinct strategies, \textsc{swap-at-last}, \textsc{nesting}, and \textsc{swap-asap} to compare the performance of the AQN network with that of its ADC-twirled counterparts. This comparison highlights the role of noise modeling in long-distance entanglement distribution. 

\textit{Five-node repeater chain.} We begin by fixing the number of nodes at $N=5$ and the coherence time at $m^\star=5$ (see Fig.~\ref{fig:5node_mstar5}). The performance of two five-node networks, AQN and TAQN, under different policies, is evaluated in terms of both entanglement and fidelity as a function of the link generation probability, $p_l$. Our results demonstrate that the AQN network consistently outperforms the TAQN network. As $p_l$ increases, the differences in concurrence and fidelity between the two networks decrease monotonically across all policies. This indicates that the advantage of performing an exact analysis of ADC, rather than employing its twirled approximation, is more pronounced in the regime of low $p_l$. Remarkably, for all policies considered, there exists a certain range of elementary link-generation probabilities in which the AQN network is able to establish an end-to-end entangled link, whereas the TAQN network fails to do so. 

 \begin{figure}[h]
    \centering
    \includegraphics[width=\linewidth]{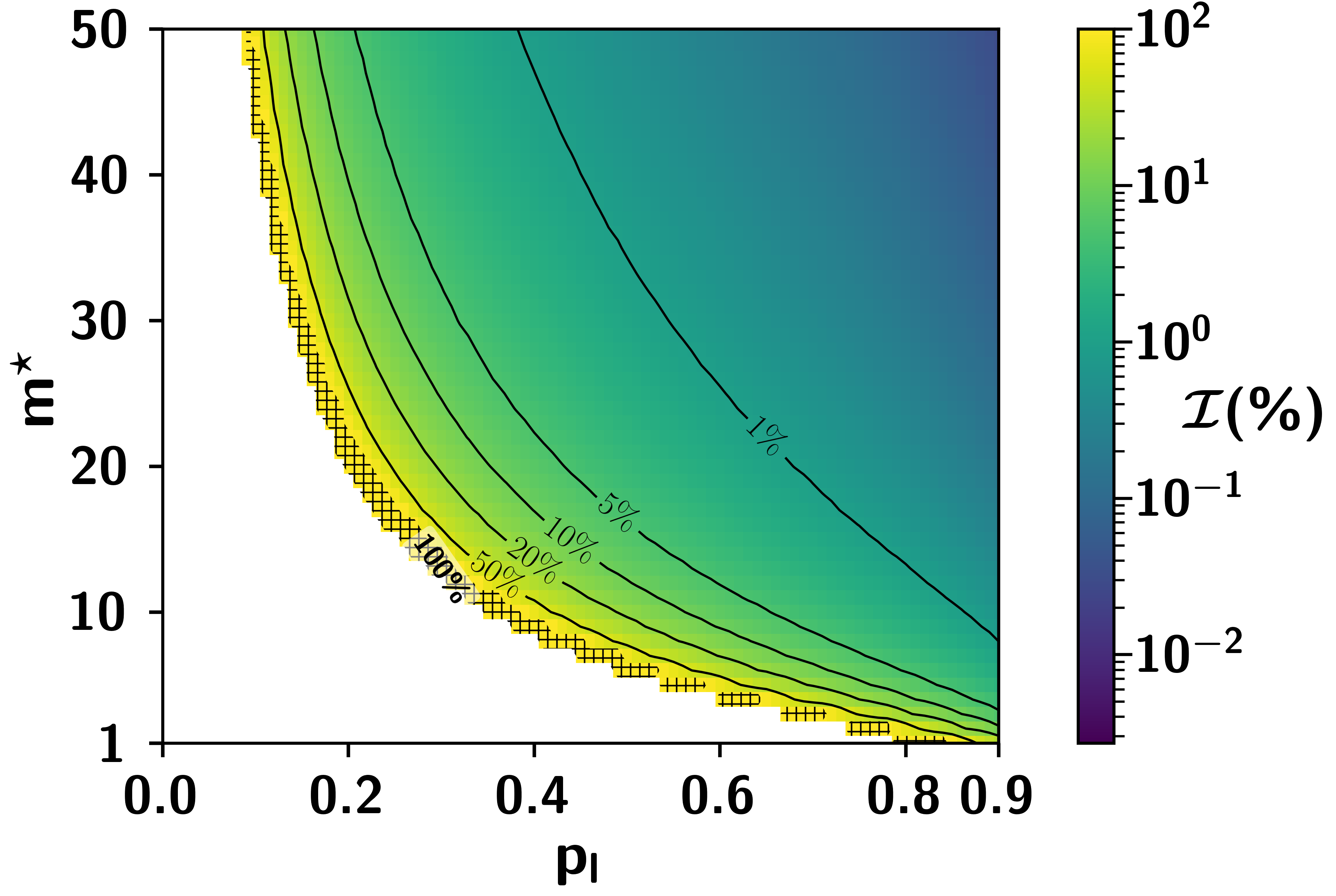}
    \caption{ \textbf{Performance of \textsc{swap-asap} policy under AQN and TAQN in a nine-node repeater chain.}  
All the configurations are the same as Fig.~\ref{fig:5node_mstar_all} except here $N=9$. The region of positive $\mathcal I$, along with the area of $\mathcal I=100\%$, is reduced compared to the five-node repeater. However, in the relevant region, for each instance of $p_l$ and $m^\star$, the value of $\mathcal I$ increases. This shows the scalability of our analysis. Both axes are dimensionless.}
    \label{fig:9node_inf}
\end{figure}

To quantitatively assess the improvements emerging from our approach, which is most discernible in our best policy, \textsc{swap-asap}, we define the relative improvement factor in the end-to-end entanglement as $$\mathcal{I} = \frac{\mathcal C_\text{AQN}-\mathcal C_\text{TAQN}}{\mathcal C_\text{AQN}}\times 100 \%.$$ This evaluates the relative improvement in the performance when using the exact noise model (AQN) versus when using the approximate twirled model (TAQN) for the same policy, which we fix to \textsc{swap-asap}. For a five-node linear chain we calculate $\mathcal I$ as a function of $p_l$ and $m^\star$ in Fig.~\ref{fig:5node_mstar_all}. It is noticeable that for the entire parameter space, $\mathcal{I}$ remains positive, implying a consistent advantage for the AQN. {The shaded region corresponds to the case of \(100\%\) relative improvement, \textit{representing an \textbf{absolute-advantage}, as the TAQN yields vanishing end-to-end entanglement whereas the AQN is capable of generating end-to-end entangled states.}} For shorter coherence times, this absolute advantage persists over a significantly wider range of $p_l$, which shrinks monotonically with increasing $m^\star$. For a given $m^\star$, as the success probability of link generation increases, the performance of TAQN and AQN converges. This analysis highlights that the exact treatment of AQN is particularly advantageous in the regime of low $p_l$ and low $m^\star$. The white region is not a practically useful region of parameter space, as it corresponds to scenarios where no end-to-end link is generated in both AQN and TAQN.

\textit{Requirement for better policy for large-scale networks to demonstrate improvement.} It is now natural to determine whether the advantage of AQN over TAQN reported before persists when the number of nodes is increased. To determine it, we consider a nine-node chain (see Fig.~\ref{fig:9node_inf}) and illustrate that in the regime $0\%\leq\mathcal I<100\%$, it increases for any given value of $m^\star$ and $p_l$ compared to a five-node chain. For instance, at $p_l=0.5, m^\star=10$, the corresponding values of $\mathcal I$ are $4.86\%$ and $17.714\%$, for $N=5$ and $N=9$, respectively. However, the parameter region exhibiting positive improvement shrinks substantially with the increase of nodes, including the shaded regime where an absolute-advantage is observed. \textcolor{black}{The latter can simply be explained by considering the increased noise accumulation in larger networks. As the size of the network increases, larger link generation probabilities and coherence times are needed to support end-to-end entanglement distribution for both AQN and TAQN. Nonetheless, as mentioned above, AQN clearly performs better in regions where end-to-end entanglement distribution is viable than the TAQN.} Further, it is important to view these results also in the context of results in Fig.~\ref{fig:5node_mstar5}. It was evident that the difference in performance increases as one chooses better entanglement distribution policies, namely being the least for \textsc{swap-at-last} and most for \textsc{swap-asap}. Previous works \cite{ Li_2021, kamin2023, halder_stav} have shown that the choice of optimal policies and cut-off strategies leads to even more improvement for longer chains. Considering the above two points, it could be inferred that the fall in fidelity, concurrence, and their improvement for AQN compared to TAQN is associated with the fact that we have not yet introduced any distillation procedure, cut-offs for fidelity control, nor is the order of the swapping optimal. Furthermore, there is a possibility of finding policies and protocols tailored to ADC noise.





\section{Conclusion}
\label{sec:con}

Repeater-based quantum networks offer a promising approach to distribute entanglement across large quantum networks, with numerous optimal policies and cut-off strategies developed under the assumption of Pauli-type noise in quantum memories. However, a significant source of noise arises from energy-relaxation processes, effectively captured by amplitude damping (AD) noise. Earlier studies have treated this noise in quantum networks through twirling approximations, which unavoidably compromise the quality of the resulting end-to-end entangled link.


Deviating from these simplified assumptions, we systematically developed a comprehensive theoretical framework for a homogeneous linear quantum repeater network influenced by amplitude damping noise, which we term an amplitude damping-affected quantum network (AQN).
A key structural difference between AQN and its Pauli-twirled variant, the twirled amplitude damping-affected quantum network (TAQN), lies in the effective parametrization of the states as they undergo successive rounds of noise and entanglement swapping. In the AQN case, the states form a structured family, fully described by four real parameters, and remain block-diagonal in the Bell basis. Remarkably, this structure persists despite successive rounds of noise and swapping, confining the dynamics to this four-parameter family. 
In contrast, in the TAQN model, the states remain Bell-diagonal throughout the evolution, effectively characterized by a single parameter in terms of their age. While TAQN yields a simpler analytical description, it significantly restricts the accessible correlation space.
We developed the noise update and the swap rules for AQN, which are fundamentally different from TAQN, where the noise is additive under swap. We then utilized these rules to develop the simulation method of AQN by tracking these parameters along with the number of noise applications per link.

Our numerical and analytical results demonstrated that the structural richness of AQN states translates into a clear performance advantage over TAQN. Specifically, AQN consistently outperforms TAQN with respect to both the entanglement and the fidelity of the end-to-end shared state. The advantage is particularly striking in the regime of low probability of elementary link generation and low coherence time, which is highly relevant for near-term experiments. We found that AQN  can exhibit, in effect,  improvements over TAQN, underscoring the practical significance of retaining the full density matrix structure rather than twirling it away. The role of various strategies of entanglement distribution accentuates this distinction. Among the policies studied -- \textsc{swap-at-last}, \textsc{nesting}, and \textsc{swap-asap} -- the performance hierarchy is clear: \textsc{swap-asap} yields the best performance, followed by \textsc{nesting}, with \textsc{swap-at-last} performing the least effectively. Crucially, the relative advantage of AQN over TAQN becomes more pronounced as the restriction on the swapping strategy is reduced. However, the region of improvement gradually decreases with an increase in the number of nodes. This is because larger quantum networks require more efficient policies {along with distillation  \cite{siddhu2025basicdistillationrealisticnoise}}, and the introduction of cut-off strategies to reduce the effect of decoherence, which we aim to address in future work \cite{mondal2026}.

In summary, our results strongly advocate for adopting AQN as a
more faithful and beneficial model than TAQN for addressing realistic amplitude damping noise in quantum repeater chains. 
Unlike TAQN, AQN retains a richer state structure while remaining analytically manageable, thereby offering a deeper theoretical understanding. At the same time, it provides practical advantages by enabling improved end-to-end entanglement distribution across repeater networks, particularly under conditions relevant to near-term experimental implementations and real-world quantum communication technologies.

\acknowledgements
We acknowledge the use of \href{https://github.com/titaschanda/QIClib}{QIClib} -- a modern C++ library for general purpose quantum information processing and quantum computing (\url{https://titaschanda.github.io/QIClib}) and cluster computing facility at Harish-Chandra Research Institute. PH acknowledges ``INFOSYS
scholarship for senior students". We acknowledge support from the project entitled ``Technology Vertical - Quantum Communication'' under the National Quantum Mission of the Department of Science and Technology (DST)  (Sanction Order No. DST/QTC/NQM/QComm/$2024/2$ (G)).

\onecolumngrid
\appendix
\section{Expressing Bell States in the Pauli Basis and their transformation under ADC}
\label{app:noise_on_pauli_basis}

We analyze the effect of the local amplitude damping channel (ADC) \cite{Nielsen2000QuantumComputation}, defined as $\mathcal{N}(\star) = \sum_{i=0}^{1} K_{i} (\star) K_{i}^{\dagger},$ with $K_0=\ketbra0+\sqrt{1-\gamma}\ketbra1$ and $K_1=\sqrt\gamma\ketbra{0}{1}$ where $\gamma\in[0,1]$ is the noise strength. In a discrete-time setting, the noise parameter can be related to the coherence time $m^\star$ of the channel as $\gamma=1-\exp(-1/m^\star)$ where $m^\star\in\{0,1,2,\ldots,\infty\}$ \cite{Ghosh2012}. The decoherence effect by ADC after $m_1$ time steps on a single qubit, 
\begin{eqnarray}
  \rho=  \begin{bmatrix}
1-\alpha & \beta \\
\beta^\star & \alpha
\end{bmatrix},
\end{eqnarray}
can be expressed as
\begin{eqnarray}
  \mathcal N^{\circ m_1}(\rho)=  \begin{bmatrix}
1-\alpha e^{-\frac{m_1}{m^\star}} & \beta e^{-\frac{m_1}{2m^\star}} \\
\beta^\star e^{-\frac{m_1}{2m^\star}} & \alpha e^{-\frac{m_1}{m^\star}}
\end{bmatrix},
\end{eqnarray}
where $\mathcal N^{\circ m_1}$ denotes the concatenation of $m_1$ number of ADC.

Given an fresh elementary link $\sigma_{PQ}=\mathtt{Proj}(\ket{\Phi^+})$  if the qubit memories $P$ and $Q$ decohere for $m_1$ time steps, then the noise updated link is written as
\begin{eqnarray}
    \nonumber \sigma_{PQ}(m_1)&=&(\mathcal N_P^{\circ {m_1}}\otimes \mathcal N_{Q}^{{\circ m_1}})\sigma_{PQ}.
\end{eqnarray}
Now, the four Bell states are defined as $\ket{\Phi_{xz}}=(\mathbb 1\otimes X^xZ^z)\ket{\Phi^+}$ for all $x,z\in \{0,1\}$, with $\ket{\Phi^{\pm}}=(\ket{00}\pm\ket{11})/\sqrt{2}$, $\ket{\Psi^{\pm}}=(\ket{01}\pm\ket{10})/\sqrt{2}$. One can easily obtain that $\ket{\Phi_{00}}=\ket{\Phi^+},\ket{\Phi_{01}}=\ket{\Phi^-},\ket{\Phi_{10}}=\ket{\Psi^+},$ and $\ket{\Phi_{11}}=\ket{\Psi^-}$. Here, $\{\mathbb 1, X, Y ,Z\}$ is the Pauli basis. The Bell states can be alternatively written in terms of the Pauli basis as 
\begin{eqnarray}
    \nonumber \Phi_{00}&=&\frac{1}{4}(\mathbb 1\otimes\mathbb 1+X\otimes X-Y\otimes Y+Z\otimes Z),\\
    \nonumber \Phi_{01}&=&\frac{1}{4}(\mathbb 1\otimes\mathbb 1-X\otimes X+Y\otimes Y+Z\otimes Z),\\
    \nonumber \Phi_{10}&=&\frac{1}{4}(\mathbb 1\otimes\mathbb 1+X\otimes X+Y\otimes Y-Z\otimes Z),\\
    \nonumber \Phi_{11}&=&\frac{1}{4}(\mathbb 1\otimes\mathbb 1-X\otimes X-Y\otimes Y-Z\otimes Z),
\end{eqnarray}
where $\Phi_{xz}=\ket{\Phi_{xz}}\bra{\Phi_{xz}}$. Action of amplitude damping noise on the Pauli basis is written as
\begin{eqnarray}
    \mathcal{N}^{\circ m_{1}}(\mathbb{1}) &=& \mathbb{1} + (1-a)Z, \quad 
    \mathcal{N}^{\circ m_{1}}(X) = \sqrt{a}X, \nonumber \\
    \mathcal{N}^{\circ m_{1}}(Y) &=& \sqrt{a}Y, \quad 
    \mathcal{N}^{\circ m_{1}}(Z) = aZ,
\end{eqnarray}
where $a=(1-\gamma)^{m_1}$. By further calculation, we arrive at the expression
\begin{eqnarray}
    \nonumber \sigma_{PQ}(m_1)&=&\frac{1}{4}[\mathbb 1\otimes \mathbb 1+a( X\otimes  X- Y\otimes  Y)\\\nonumber &&+(1- a)(\mathbb 1\otimes  Z+ Z\otimes \mathbb 1)\\&&+\{(1- a)^2+ a^2\} Z\otimes  Z].
    \label{eq:m1_times_noise}
\end{eqnarray}
Therefore, by changing Eq.~(\ref{eq:m1_times_noise}) in the Bell basis, one can obtain
\begin{eqnarray}
\sigma_{PQ}&=&c_1\Phi_{00}+c_2\Phi_{01}+c_3(\Phi_{01}^{00}+\Phi_{00}^{01})+2^{-1}(1-\sum_{i=1}^2c_i)(\Phi_{10}+\Phi_{11}),
    \label{eq:m1_noise_bell}
\end{eqnarray}
where, \(\mathbb{1}\otimes Z+Z\otimes\mathbb1=2\big(\ket{\Phi_{00}}\bra{\Phi_{01}}+\ket{\Phi_{01}}\bra{\Phi_{00}}\big)\), and \(Z\otimes\mathbb1-\mathbb{1}\otimes Z=2\big(\ket{\Phi_{10}}\bra{\Phi_{11}}+\ket{\Phi_{11}}\bra{\Phi_{10}}\big)\),\\\(\quad c_1=\frac{(1+a^2)}{2},\quad  c_2=\frac{(1-a)^2}{2}, \quad c_3=\frac{(1-a)}{2}\).

\section{Swap update rule of two elementary noisy links}
\label{app:swap_1st_round}
Let us consider two links \(\rho_{PQ_1}=\Phi_{00}\) and \(\rho_{Q_2R}=\Phi_{00}\) where  entanglement swapping takes place at the node \((Q_1Q_2)\). The action of ADC $m_1$ times on a given link $PQ_1$ can be represented as
\begin{eqnarray}
    \nonumber \sigma_{AR_1}(m_1)&=&c_1\Phi_{00}+c_2\Phi_{01}+c_3(\Phi_{01}^{00}+\Phi_{00}^{01})+c_4(\Phi_{10}+\Phi_{11}),
    \label{eq:noisy_1st_link}
\end{eqnarray}
with 
\begin{eqnarray}
   \nonumber  c_1&=&\frac{(1+a^2)}{2},\quad c_2=\frac{(1-a)^2}{2},\quad c_3=\frac{(1-a)}{2},\quad c_4=\frac12(1-\sum_{i=1}^2c_i)=\frac{a(1-a)}{2}.
   \label{eq:update_1st}
\end{eqnarray}
Similarly, after \(m_2\) times local ADC noise on \(Q_2R\) link can be  given as
\begin{eqnarray}
 \nonumber \sigma_{Q_2R}(m_2)&=&d_1\Phi_{00}+d_2\Phi_{01}+d_3(\Phi_{01}^{00}+\Phi_{00}^{01})+d_4(\Phi_{10}+\Phi_{11}),
 \label{eq:noisy_2nd_link}
\end{eqnarray}
with 
\begin{eqnarray}
   \nonumber  d_1&=&\frac{(1+b^2)}{2},\quad d_2=\frac{(1-b)^2}{2},\quad d_3=\frac{(1-b)}{2},\quad d_4=\frac12(1-\sum_{i=1}^2d_i)=\frac{b(1-b)}{2},
   \label{eq:update_2nd}
\end{eqnarray}
where \(a=(1-\gamma)^{m_1}\) and \(b=(1-\gamma)^{m_2}\).
Here, $\Phi_{xz}=\ketbra{\Phi_{xz}}{\Phi_{xz}}$ and $\Phi_{x'z'}^{xz}=\ketbra{\Phi_{xz}}{\Phi_{x'z'}}$. Therefore, the composite state of two links is given by 
\begin{eqnarray}   \rho_{PQ_1Q_2R}=\sigma_{PQ_1}(m_1)\otimes \sigma_{Q_2R}(m_2).
\end{eqnarray}
The relevant terms for swapping on node \(Q_1Q_2\) are given by
\begin{eqnarray}
   \sigma_{PQ_1}(m_1)\otimes \sigma_{Q_2R}(m_2)& \rightarrow& \nonumber\frac{1}{16}(\sum_{i=1}^4\mathcal{S}_i)_{PRQ_1Q_2},\\ 
   &=&\frac{1}{16}[D(0,0,1,0)\Phi_{00}
   +D(0,1,0,0)\Phi_{01}
   +D(0,0,0,0)\Phi_{10}
   +D(0,1,1,1)\Phi_{11}],
   \label{eq:swap_rel}
\end{eqnarray}
where \(\mathcal{S}_1=L_{a}\otimes L_{b}\otimes \mathbb{1}\otimes \mathbb{1}\), \(\mathcal{S}_2=ab  
 X^{\otimes 4} \), \(\mathcal{S}_3=ab  Y^{\otimes 4}\) and \(\mathcal{S}_4=M_{a}\otimes M_{b}\otimes Z^{\otimes 2} \) and  \(D(l,m,n,q)=(-1)^lL_{a}\otimes L_{b} +(-1)^m ab X^{\otimes 2}  +(-1)^n ab Y^{\otimes 2} +(-1)^q M_{a}\otimes M_{b}\), and \(L_{a}=\mathbb{1}+(1-a)Z\), \(L_{b}=\mathbb{1}+(1-b)Z\) and \(M_{a}=(1-a)\mathbb{1}+\mathbb{a}Z\), \(M_{b}=(1-b)\mathbb{1}+\mathbb{b}Z\). Here, \(\mathbb{a}=a^2+(1-a)^2\) and  \(\mathbb{b}=b^2+(1-b)^2\).

After the entanglement swapping protocol, we have the average state as
\begin{eqnarray}
    \nonumber &&\rho_{PR}=\mathcal S_{PRQ_1Q_2\to PR}(\rho_{PRQ_1Q_2})\\\nonumber &=&\sum_{x,z=0}^1(\bra{\Phi_{xz}}_{Q_1Q_2}\otimes X^x_RZ^z_R)\rho_{PRQ_1Q_2}(\ket{\Phi_{xz}}_{Q_1Q_2}\otimes Z^z_RX^x_R),\\\nonumber
    &=& [f_1\Phi_{00}+f_2\Phi_{01}+f_3(\Phi_{01}^{00}+\Phi_{00}^{01}) + f_4(\Phi_{11}^{10}+\Phi_{10}^{11})+f_5(\Phi_{10}+\Phi_{11})]_{PR},
    \label{eq:1st_swap}
\end{eqnarray}
where
\begin{eqnarray}
  \nonumber f_1&=&\frac{1}{4}(1+2ab+\mathbb{ab}),\quad f_2=\frac{1}{4}(1-2ab+\mathbb{ab}),\quad f_3=\frac{1}{4}(1-a)(1+\mathbb b),\quad f_4=\frac{1}{4}(1-a)(1-\mathbb b),\nonumber\\&& f_5=\frac12(1-\sum_{i=1}^2f_i)=\frac{1}{4}(1- \mathbb a \mathbb b).
   \label{eq:1st_swap_all_coefficient}
\end{eqnarray}
Therefore, by replacing \(a\) and \(b\), \(\mathbb{a}=a^2+(1-a)^2\) and \(\mathbb{b}=b^2+(1-b)^2\) in Eq.~(\ref{eq:1st_swap_all_coefficient}) in terms of \(\{c_i,d_i\}\), the coefficients takes the form as  
\begin{eqnarray}
  \nonumber f_{1}&=&\frac{1}{4}\bigg(1+ \big(2\sum_{i=1}^2c_i-1\big)\big(2\sum_{i=1}^2d_i-1\big) +2 (c_1-c_2)(d_1-d_2)\bigg),\\\nonumber  f_{2}&=&\frac{1}{4}\bigg(1+ \big(2\sum_{i=1}^2c_i-1\big)\big(2\sum_{i=1}^2d_i-1\big) -2 (c_1-c_2)(d_1-d_2)\bigg),\\\nonumber f_3&=&c_3\sum_{i=1}^2d_i,\quad f_4=c_3(1-\sum_{i=1}^2d_i),\quad f_5=\frac12(1-\sum_{i=1}^2f_i)=\frac{1}{4}\bigg(1-\big(2\sum_{i=1}^2c_i-1\big)\big(2\sum_{i=1}^2d_i-1\big)\bigg).
   \label{eq:1st_swap_all_coe_1}
\end{eqnarray}
\section{General age update rule in network under amplitude damping noise}
\label{app:gen_reule}

\textbf{Action of ADC on links:} Under ADC, the most generalized state in the entanglement swapping scenario is given by 
\begin{eqnarray}
 \rho_{PQ_1}&=&f_1\Phi_{00}+f_2\Phi_{01}+f_3(\Phi_{01}^{00}+\Phi_{00}^{01})+ f_4(\Phi_{11}^{10}+\Phi_{10}^{11})]_{AR_1}+f_5(\Phi_{10}+\Phi_{11}),  
\end{eqnarray}
with the normalization constant, $f_1+f_2+2f_5=1$. Similarly, the link between $Q_2$ and $R$ is given by 
\begin{eqnarray}
    \nonumber \rho_{Q_2R}&=&g_1\Phi_{00}+g_2\Phi_{01}+g_3(\Phi_{01}^{00}+\Phi_{00}^{01})+ g_4(\Phi_{11}^{10}+\Phi_{10}^{11})+g_5(\Phi_{10}+\Phi_{11}),
\end{eqnarray}
with $g_1+g_2+2g_5=1$.
The action of ADC on the link $PQ_1$ is given by
\begin{eqnarray}
    \nonumber (\mathcal N_P^{n_1}\otimes \mathcal N_{Q_1}^{n_1})\rho_{PQ_1}&=&[f_1^1\Phi_{00}+f_2^1\Phi_{01}+f_3^1(\Phi_{01}^{00}+\Phi_{00}^{01})
    + f_4^1(\Phi_{11}^{10}+\Phi_{10}^{11})+f_5^1(\Phi_{10}+\Phi_{11})]_{PQ_1},\\
\end{eqnarray}
where 
\begin{eqnarray}
    \nonumber f_1^1&=&\frac{1}{4}[f_1(1+2 t+\mathbb t_+)+f_2(1-2t+\mathbb t_+) +2f_5(1+\mathbb t_-)+4f_3t(1-t)],\\\nonumber f_2^1&=&\frac{1}{4}[f_1(1-2t+\mathbb t_+)+f_2(1+2t+\mathbb t_+) +2f_5(1+\mathbb t_-)+4f_3t(1-t)],\\\nonumber
    f_3^1&=&\frac{1}{2}(1-t)+tf_3,\\ \nonumber f_4^1&=&f_4t,\\
    f_5^1&=&\frac{1}{4}[(f_1+f_2)(1-\mathbb t_+)+2f_5\big(1-\mathbb{t_-}\big) -4f_3t(1-t)],
    \label{eq:adc_general_noise_update}
\end{eqnarray}
with \(t=(1-\gamma)^{n_1}\),  \(\mathbb{t_\pm}=t^2\pm(1-t)^2\) and the normalization constant, \(f_5^1=\frac12\big(1-\sum_{i=1}^2f_{i}^1\big)\). Similarly, the effect of ADC on the link $Q_2R$ can be expressed as 
\begin{eqnarray}
    \nonumber (\mathcal N_{Q_{2}}^{n_2}\otimes \mathcal N_{R}^{n_2})\rho_{Q_{2}R}&=&[g_1^1\Phi_{00}+g_2^1\Phi_{01}+g_3^1(\Phi_{01}^{00}+\Phi_{00}^{01})+g_4^1(\Phi_{11}^{10}+\Phi_{10}^{11})+g_5^1(\Phi_{10}+\Phi_{11})]_{Q_2R}.
\end{eqnarray}
Here, the updation of $g_i\to g_i^1$ $\forall i$ follows the similar rule of Eq.~\eqref{eq:adc_general_noise_update} with $t$ and $\mathbb t$ being replaced by $v=(1-\gamma)^{n_2}$ and $\mathbb v=v^2+(1-v)^2$ respectively and the normalization constant, \(g_5^1=\frac12\big(1-\sum_{i=1}^2g_{i}^1\big)\). Therefore, the composite state of two links is given by 
\begin{eqnarray}   \rho_{PQ_1Q_2R}=(\mathcal N_P^{n_1}\otimes \mathcal N_{Q_1}^{n_1})\rho_{PQ_1}\otimes (\mathcal N_{Q_2}^{n_2}\otimes \mathcal N_{R}^{n_2})\rho_{Q_2R}.
\end{eqnarray}

\textbf{Swapping:} After the entanglement swapping protocol, we have the average state as 
\begin{eqnarray}
    \nonumber &&\rho_{PR}=\mathcal S_{PQ_1Q_2R\to PR}(\rho_{PQ_1Q_2R})\\\nonumber &=&\sum_{x,z=0}^1(\bra{\Phi_{xz}}_{Q_1Q_2}\otimes X^x_RZ^z_R)\rho_{PQ_1Q_2R}(\ket{\Phi_{xz}}_{Q_1Q_2}\otimes Z^z_RX^x_R),\\\nonumber
    &=& [h_1\Phi_{00}+h_2\Phi_{01}+h_3(\Phi_{01}^{00}+\Phi_{00}^{01})+ h_4(\Phi_{11}^{10}+\Phi_{10}^{11})+h_5(\Phi_{10}+\Phi_{11})]_{PR},
\end{eqnarray}
and the normalization constant, \(h_1+h_2+2h_5=1\), where 
\begin{eqnarray}
   \nonumber h_1&=&4(s_{1}+s_{2}+2s_{3}),\quad h_2=4(s_{1}+s_{2}-2s_{3}),\\\nonumber h_3&=&4(s_{1}-s_{2}),\quad h_4=4(s_{4}+s_{5}),\\
   h_5&=&\frac12\big(1-\sum_{i=1}^2h_i\big)=4(s_{4}-s_{5}),
   \label{eq:gen_swap_update}
\end{eqnarray}
with 
\begin{eqnarray}
   \nonumber s_1 &=& \frac{1}{16}, \quad 
   s_{2} = \frac{1}{16}[(1 - 4f_{5}^1)(1 - 4g_{5}^1)] = \frac{1}{16}[1 +2(f_1^1+f_2^1)][1 +2(g_1^1+g_2^1)], \\\nonumber \quad 
   s_3 &=& \frac{1}{16}[(f_{1}^1 - f_{2}^1)(g_{1}^1 - g_{2}^1)],\quad 
   s_{4} = \frac{1}{8}(f_{3}^{1} + f_{4}^{1}),\\ 
   \nonumber s_5 &=& \frac{1}{8}[(f_{3}^1 - f_{4}^1)(1 - 4g_{5}^1)].
\end{eqnarray}

\section{Entanglement of links in AQN}
\label{app:concurence}
For two-qubit systems, entanglement can be quantified using concurrence~\cite{Wootters1998} and is defined as
\begin{eqnarray}
    C(\rho) = \max\left\{0, \sqrt{\lambda_1} - \sqrt{\lambda_2} - \sqrt{\lambda_3} - \sqrt{\lambda_4}\right\},
    \label{eq:concurrence}
\end{eqnarray}
where $\lambda_i$s are the eigenvalues, in decreasing order, of the matrix 
$\tilde{\rho} = \rho \, (\sigma_y \otimes \sigma_y) \, \rho^\star \, (\sigma_y \otimes \sigma_y)$.
In this work, we focus on the important class of two-qubit $X$-states, whose density matrices have 
nonzero entries only along the main diagonal and anti-diagonal,
\begin{equation}
\rho^{AB} =
\begin{pmatrix}
a & 0 & 0 & w \\
0 & b & z & 0 \\
0 & z^* & c & 0 \\
w^* & 0 & 0 & d
\end{pmatrix}, 
\qquad a+b+c+d = 1.
\end{equation}
The concurrence, in this case, becomes \cite{Yu2007}
\(C(\rho) = 2 \max \big( 0,\, |z| - \sqrt{ad},\, |w| - \sqrt{bc} \big).
\) An arbitrary link of AQN can be represented as
\(\rho_{f}=h_1\Phi_{00}+h_2\Phi_{01}+h_3(\Phi_{01}^{00}+\Phi_{00}^{01})+h_4(\Phi_{10}^{11}+\Phi_{11}^{10})+h_5(\Phi_{10}+\Phi_{11})\) (see Theorem~\ref{th:gen_state_AQN}) where \(h_5=\frac{1}{2}(1-h_1-h_2)\) and in the computational basis, it takes the form as
\begin{equation}
\rho_{f} =
\begin{pmatrix}
\frac{h_1+h_2}{2}+h_3 & 0 & 0 & \frac{h_1-h_2}{2} \\
0 & h_5+h_4 & 0 & 0 \\
0 & 0 & h_5-h_4 & 0 \\
\frac{h_1-h_2}{2} & 0 & 0 & \frac{h_1+h_2}{2}+h_3
\end{pmatrix}.
\label{eq:con_mat}
\end{equation}
Therefore, the concurrence of \(\rho_f\) is 
\begin{eqnarray}
    C(\rho_f)&=&2\max \big(0,|\frac{h_1-h_2}{2}|-\sqrt{h_5^2-h_4^2}\big),\nonumber\\&=&\max\bigg\{0,\bigg|\mathcal F(\sigma(\{h_i\}_{i=1}^4))-h_2\bigg|-\sqrt{(1-\sum_{i=1}^2h_i)^2-4h_4^2}\bigg\}.
\end{eqnarray}
where \(h_1=\mathcal F(\sigma(\{h_i\}_{i=1}^4))\) is the fidelity of end-to-end link.

\twocolumngrid

\bibliography{reference}

\end{document}